\newtheorem{definition}{Definition}
\newtheorem{lemma}[definition]{Lemma}
\newtheorem{theorem}[definition]{Theorem}
\newtheorem{corollary}[definition]{Corollary}
\theoremstyle{plain}
\pgfplotsset{width=7.6cm,compat=1.9} 
\newcommand{\pname}[0]{\textsc{Skueue}}
\newcommand{\enqueue}[1]{\textsc{Enqueue}\ensuremath{(#1)}\xspace}
\newcommand{\dequeue}{\textsc{Dequeue}\ensuremath{()}\xspace}
\newcommand{\push}[1]{\textsc{Push}\ensuremath{(#1)}\xspace}
\newcommand{\pop}{\textsc{Pop}\ensuremath{()}\xspace}
\newcommand{\join}[1]{\textsc{Join}\ensuremath{(#1)}\xspace}
\newcommand{\leave}{\textsc{Leave}\ensuremath{()}\xspace}
\newcommand{\enq}[2]{\textsc{Enq}\ensuremath{_{#1,#2}}\xspace}
\newcommand{\deq}[2]{\textsc{Deq}\ensuremath{_{#1,#2}}\xspace}
\newcommand{\op}[2]{\textsc{Op}\ensuremath{_{#1,#2}}\xspace}
\newcommand\copyrighttext{%
  \footnotesize \textcopyright 2018 IEEE. This is the full version of a corresponding paper published in the proceedings of IPDPS 2018 (DOI: 10.1109/IPDPS.2018.00113). 
  Personal use of this material is permitted.
  Permission from IEEE must be obtained for all other uses, in any current or future 
  media, including reprinting/republishing this material for advertising or promotional 
  purposes, creating new collective works, for resale or redistribution to servers or 
  lists, or reuse of any copyrighted component of this work in other works. 
}
\newcommand\copyrightnotice{%
\begin{tikzpicture}[remember picture,overlay]
\node[anchor=south,yshift=10pt] at (current page.south) {\fbox{\parbox{\dimexpr\textwidth-\fboxsep-\fboxrule\relax}{\copyrighttext}}};
\end{tikzpicture}%
}
\begin{document}
%
\title{\pname: A Scalable and Sequentially Consistent Distributed Queue*}


\author{\IEEEauthorblockN{Michael Feldmann, Christian Scheideler and Alexander Setzer\thanks{*This work was partially supported by the German Research Foundation (DFG) within the Collaborative Research Center ``On-The-Fly Computing'' (SFB 901)}}
\IEEEauthorblockA{ Department of Computer Science\\
Paderborn University, Germany\\
\{michael.feldmann, scheideler, alexander.setzer\}@upb.de} 
}

\maketitle
\copyrightnotice

\begin{abstract}
We propose a distributed protocol for a queue, called \pname{}, which spreads its data fairly onto multiple processes, avoiding bottlenecks in high throughput scenarios.
\pname{} can be used in highly dynamic environments, through the addition of \join{} and \leave{} requests to the standard queue operations \enqueue{} and \dequeue.
Furthermore \pname{} satisfies sequential consistency in the asynchronous message passing model.
Scalability is achieved by aggregating multiple requests to a batch, which can then be processed in a distributed fashion without hurting the queue semantics.
Operations in \pname{} need a logarithmic number of rounds w.h.p. until they are processed, even under a high rate of incoming requests.
\end{abstract}

\begin{IEEEkeywords}Distributed Systems; Distributed Data Structures; Distributed Queue; Queue Semantics
\end{IEEEkeywords}

\IEEEpeerreviewmaketitle

\section{Introduction} \label{sec:intro}
Like in the sequential world, efficient distributed data structures are important in order to realize efficient distributed applications. The most prominent type of distributed data structure is the distributed hash table (DHT). Many distributed data stores employ some form of DHT for lookup. Important applications include file sharing (e.g., BitTorrent), distributed file systems (e.g., PAST), publish subscribe systems (e.g., SCRIBE), and distributed databases (e.g., Apache Cassandra). Other distributed forms of well-known data structures, however, like queues, stacks, and heaps has been given much less attention though queues, for example, have a number of interesting applications as well. A distributed queue can be used to come up with a unique ordering of messages, transactions, or jobs, and it can be used to realize fair work stealing~\cite{DBLP:journals/jacm/BlumofeL99} since tasks available in the system would be fetched in FIFO order. Other applications are distributed mutual exclusion, distributed counting, or distributed implementations of synchronization primitives. Server-based approaches of realizing a queue in a distributed system already exist, like Apache ActiveMQ, IBM MQ, or JMS queues. Many other implementations of message and job queues can be found at http://queues.io/. However, none of these implementations provides a queue that allows massively parallel accesses without requiring powerful servers. The major problem of coming up with a fully distributed version of a queue is that its semantics are inherently sequential. Nevertheless, we are able to come up with a distributed protocol for a queue ensuring sequential consistency that fairly distributes the communication and storage load among all members of the distributed system and that can efficiently process even massive amounts of \enqueue{} and \dequeue requests. Our protocol works in the asynchronous message passing model and can also handle massive amounts of join and leave requests efficiently. We are not aware of any distributed queue with a comparable performance.

\subsection{Basic notation}

A \emph{Distributed Queue} provides four operations: \enqueue{}, \dequeue, \join{} and \leave. \enqueue{} adds an element to the queue and \dequeue removes an element from the queue so that the FIFO requirement is satisfied. \join{} allows a process to enter the system while \leave{} allows a process to leave the system. Let $\mathcal E$ be the universe of all elements that may possibly be put into the distributed queue.

While in a standard, sequential queue it is very easy to guarantee the FIFO property, it is much harder to guarantee in a distributed system, especially when messages have arbitrary finite delays and the processes do not have access to a local or global clock, as is usually assumed in the asynchronous message passing model. In essence, a global serialization of the requests has to be established without creating bottlenecks in the system. We will show that it is possible to obtain a serialization ensuring sequential consistency even under a high request rate. In order to define sequential consistency, we first need some notation.

Let \enq{v}{i} refer to the $i$-th \enqueue{} request that was called in process $v$.
Analogously, \deq{v}{i} refers to the $i$-th \dequeue request that was called in process $v$.
Furthermore, \op{v}{i} denotes the $i$-th (\enqueue{} or \dequeue) request that was called in process $v$.
We assume w.l.o.g. that every $e \in \mathcal E$ is enqueued at most once into the system (an easy way to achieve this is to make the calling process and the current count of requests performed a part of $e$).
Let $S$ be the set of all \enqueue{} and \dequeue requests issued by the processes in the system. We say that $\enq{v}{i}$ is {\em matched} with $\deq{w}{j}$ if the $\deq{w}{j}$ request returns the element contained in the $\enq{v}{i}$ request. 
Let $M$ be the set of all matchings.
Note that there may be requests that are not matched and thus not contained in $M$.

\begin{definition}\label{def:semantics}
A Distributed Queue protocol with operations \enqueue{} and \dequeue is \emph{sequentially consistent} if and only if there is an ordering $\prec$ on the set $S$ of all \enqueue{} and \dequeue requests issued to the system so that the set of all enqueue-dequeue matchings $M$ established by the protocol satisfies:
 \begin{enumerate}
  \item for all $(\enq{v}{i}, \deq{w}{j}) \in M$: $\enq{v}{i} \prec \deq{w}{j}$,
  \item for all $(\enq{v}{i}, \deq{w}{j}) \in M$: during the execution, there is no $\deq{u}{k}$ not contained in $M$ such that $\enq{v}{i} \prec \deq{u}{k} \prec \deq{w}{j}$, and there is no $\enq{u}{k}$ not contained in $M$ such that $\enq{u}{k} \prec \enq{v}{i} \prec \deq{w}{j}$,
  \item for all distinct $(\enq{u}{i}, \deq{v}{j}), (\enq{w}{k},\deq{x}{l}) \in M$ it does not hold:
    $\enq{u}{i} \prec \enq{w}{k} \prec \deq{x}{l} \prec \deq{v}{j}$ or $\enq{w}{k} \prec \enq{u}{i} \prec \deq{v}{j} \prec \deq{x}{l}$, and
  \item for all $v \in V$ and $i \in \mathbb{N}$: $\op{v}{i} \prec \op{v}{i+1}$
 \end{enumerate}
\end{definition}

Intuitively, the four properties have the following meaning:
The first property means that an element has to be enqueued before it can be dequeued.
The second property means that each \dequeue request returns a value if there is one in the queue and that each element passed as a parameter of an \enqueue{} request will be added to the queue.
The third property means elements are dequeued in the order they have been added to the queue.
Finally, the fourth property is the \emph{local consistency} property:
It means that for each single process, the requests performed by this process have to come up in $\prec$ in the order they were executed by that process.

Note that if there is only a single process in the system, then the \enqueue{} and \dequeue operations of the Distributed Queue have exactly the same semantics as a classical queue.

\subsection{Model}

The distributed queue consists of multiple processes that are interconnected by some overlay network.
We model the overlay network as a directed graph $G = (V, E)$, where $V$ represents the set of processes and an edge $(v,w)$ indicates that $v$ knows $w$ and can therefore send messages to $w$.
Each process $v$ can be identified by a unique identifier $v.id \in \mathbb{N}$.

We consider the asynchronous message passing model where every process $v$ has a set $v.Ch$ for all incoming messages called its \emph{channel}. That is, if a process $u$ sends a message $m$ to process $v$, then $m$ is put into $v.Ch$. A channel can hold an arbitrary finite number of messages and messages never get duplicated or lost.

Processes may execute \textit{actions}: An action is just a standard procedure that consists of a name, a (possibly empty) set of parameters, and a sequence of statements that are executed when calling that action.
It may be called locally or remotely, i.e., every message that is sent to a process contains the name and the parameters of the action to be called. We will only consider messages that are remote action calls. An action in a process $v$ is {\em enabled} if there is a request for calling it in $v.Ch$. Once the request is processed, it is removed from $v.Ch$. We assume fair message receipt, i.e., every request in a channel is eventually processed.
Additionally, there is an action that is not triggered by messages but is executed periodically by each process.
We call this action \textsc{Timeout}.

We define the \textit{system state} to be an assignment of a value to all protocol-specific variables in the processes and a set of messages to each channel. A \textit{computation} is a potentially infinite sequence of system states, where the state $s_{i+1}$ can be reached from its previous state $s_i$ by executing an action that is enabled in $s_i$.

We place no bounds on the message propagation delay or the relative process execution speed, i.e., we allow fully asynchronous computations and non-FIFO message delivery.

For the runtime analysis, we assume the standard synchronous message passing model, where time proceeds in \emph{rounds} and all messages that are sent out in round $i$ will be processed in round $i+1$.
Additionally, we assume that each process executes its \textsc{Timeout} action once in each round.

\subsection{Related Work}
The most important type of distributed data structure is the distributed hash table, for which seminal work has been done by Plaxton et al.~\cite{DBLP:conf/spaa/PlaxtonRR97} and Karger et al.~\cite{DBLP:conf/stoc/KargerLLPLL97}.
Distributed hash tables have a wide range of practical realizations, such as Chord~\cite{DBLP:conf/sigcomm/StoicaMKKB01}, Pastry~\cite{DBLP:conf/middleware/RowstronD01}, Tapestry~\cite{DBLP:journals/jsac/ZhaoHSRJK04} or Cassandra~\cite{DBLP:conf/podc/LakshmanM09}.
Our queue protocol makes use of a distributed hash table through consistent hashing.

Distributed hash tables do not support range queries, so distributed trees were proposed, e.g. in~\cite{DBLP:conf/isaac/AlaeiTG05, DBLP:conf/sigmod/KrollW94}, to overcome this.

There is a wealth of literature on \emph{concurrent} data structures.
Consider, for example, \cite{DBLP:conf/podc/MichaelS96} for a queue,~\cite{DBLP:journals/jpdc/HendlerSY10} for a stack,~\cite{DBLP:conf/ipps/ShavitL00} for a priority queue or~\cite{DBLP:reference/crc/MoirS04} for a general survey.
These structures allow multiple processes to send requests to a data structure that is stored in shared memory.
Hendler et al.~\cite{DBLP:conf/wdag/HendlerIST10} present a scalable synchronous concurrent queue, where they used a parallel flat-combining algorithm similar to the aggregation technique used in this work:
A single 'combiner' thread gets to know requests of other threads and then executes these requests on the queue.
However, they do not provide any guarantees on the semantics, as their queue is considered to be \emph{unfair}, meaning that it does not impose an order on the servicing of requests.
Shavit and Taubenfeld formulated some (relaxed) semantics for concurrent queues and stacks in~\cite{DBLP:journals/dc/ShavitT16}.
The main difference of concurrent data structures compared to distributed data structures is that there has to be a single instance that stores the data, whereas distributed data structures are fully decentralized.

A scalable distributed heap called \emph{SHELL} has been presented by Scheideler and Schmid in~\cite{DBLP:conf/icalp/ScheidelerS09}. \emph{SHELL}'s topology resembles the De Bruijn graph and is shown to be very resilient against Sybil attacks.
Our protocol uses the virtual De Bruijn graph from Richa et al.~\cite{DBLP:conf/sss/RichaSS11}, which is based on~\cite{DBLP:journals/talg/NaorW07}, where Naor and Wieder showed how to construct P2P systems in the continuous space.

Plenty of work has also been done on \emph{distributed queuing}, but this is very different from our approach.
Distributed queuing is all about the participants of the system forming a queue: Every process introduces itself to its predecessor and (depending on its position) knows its successor in the queue.
Distributed queuing is not about inserting elements into a distributed data structure that is maintained by multiple processes, which can generate requests to the data structure.
See, for example, the Arrow protocol in~\cite{DBLP:conf/podc/HerlihyTW01}, which was made self-stabilizing in~\cite{DBLP:journals/tpds/TirthapuraH06}, or a protocol for dynamic networks in~\cite{DBLP:journals/ppl/SharmaB15}.

\subsection{Our Contribution}
We propose a protocol for a distributed queue which guarantees sequential consistency (Definition~\ref{def:semantics}).
Requests can be handled very effectively due to the aggregation of multiple requests to a batch.
This fact makes our queue highly scalable for both, a large number of processes and a high load of queue requests.
More precisely, when assuming synchronous message passing, our \enqueue{} and \dequeue operations are processed in $\mathcal O(\log n)$ rounds w.h.p.
Furthermore we show that we can process $n$ \join{} or $n/2$ \leave operations in $\mathcal O(\log n)$ rounds.
Through the usage of a distributed hash table, our distributed queue allocates its elements equally among all processes, such that no process stores significantly more elements than the rest.

The paper is structured as follows: In Section~\ref{sec:pre} we describe the linearized De Bruijn network topology, into which we embed a distributed hash table.
The general ideas for our protocol are presented in Section~\ref{sec:enq_deq} along with descriptions for \enqueue{} and \dequeue operations.
In Section~\ref{sec:join_leave} we extend the protocol in order to support \join{} and  \leave operations.
We explain how to modify \pname{} in order to work as a distributed stack and present experimental results for both the queue and the stack (Sections~\ref{sec:stack} and~\ref{sec:evaluation}).
Before we conclude the paper in Section~\ref{sec:conclusion}, we analyze the most important properties of our protocol in Section~\ref{sec:analysis}.

\section{Preliminaries} \label{sec:pre}
\subsection{Linearized De Bruijn Network}
We adapt a dynamic version of the De Bruijn graph from~\cite{DBLP:conf/sss/RichaSS11}, which is based on~\cite{DBLP:journals/talg/NaorW07}, for our network topology:

\begin{definition}
	The \emph{Linearized De Bruijn network} (LDB) is a directed graph $G = (V, E)$, where each process $v$ emulates $3$ (virtual) nodes: A \emph{left virtual node} $l(v) \in V$, a \emph{middle virtual node} $m(v) \in V$ and a \emph{right virtual node} $r(v) \in V$.
	The middle virtual node $m(v)$ has a real-valued \emph{label}\footnote{We may indistinctively use $v$ to denote a node or its label, when clear from the context.} in the interval $[0,1)$.
	The label of $l(v)$ is defined as $m(v)/2$ and the label of $r(v)$ is defined as $(m(v)+1)/2$.
	The collection of all virtual nodes $v \in V$ is arranged in a sorted cycle ordered by node labels, and $(v,w) \in E$ if and only if $v$ and $w$ are consecutive in this ordering (\emph{linear edges}) or $v$ and $w$ are emulated by the same process (\emph{virtual edges}).
\end{definition}

We will assume that the label of a middle node $m(v)$ is determined by applying a publicly known pseudorandom hash function on the identifier $v.id$.
We say that a node $v$ is \emph{right} (resp. \emph{left}) of a node $w$ if the label of $v$ is greater (resp. smaller) than the label of $w$, i.e., $v > w$ (resp. $v < w$). 
If $v$ and $w$ are consecutive in the linear ordering and $v < w$ (resp. $v > w$), we say that $w$ is $v$'s \emph{successor} (resp. \emph{predecessor}) and denote it by $succ(v)$ (resp. $pred(v)$).
As a special case we define $pred(v_{min}) = v_{max}$ and $succ(v_{max}) = v_{min}$, where $v_{min}$ is the node with minimal label value and $v_{max}$ is the node with maximal label value.
This guarantees that each node has a well defined predecessor and successor on the sorted cycle.
More precisely, each node $v$ maintains two variables $pred(v)$ and $succ(v)$ for storing its predecessor and successor nodes.
Whenever a node $v$ gets to know the reference of another node $w$, such that $w$ is stored in either $pred(v)$ or $succ(v)$, we assume that $v$ also gets to know whether $w$ is a left, middle or right virtual node.
This can be done easily by attaching the information to the message that contains the node reference.
By adopting the result from~\cite{DBLP:conf/sss/RichaSS11}, one can show that routing in the LDB can be done in $\mathcal O(\log n)$ rounds w.h.p.:

\begin{lemma}\label{lemma:LDB:routing}
For any $p \in [0,1)$, routing a message from a source node $v$ to a node that is the predecessor of $p$ in the LDB can be done in $\mathcal O(\log n)$ rounds w.h.p.
\end{lemma}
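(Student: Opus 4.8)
The plan is to transport the continuous--discrete De Bruijn routing of Naor and Wieder~\cite{DBLP:journals/talg/NaorW07}, in the linearized form of Richa et al.~\cite{DBLP:conf/sss/RichaSS11}, to the LDB. Recall the De Bruijn principle: a point $x = 0.x_1x_2x_3\ldots \in [0,1)$ has the two ``halving'' neighbours $x/2 = 0.0x_1x_2\ldots$ and $(x+1)/2 = 0.1x_1x_2\ldots$, and from \emph{any} start one reaches a point close to a target $p = 0.p_1p_2\ldots$ by repeatedly prepending the next bit of $p$: after $k$ halving steps whose prepended bits are $p_k, p_{k-1}, \ldots, p_1$, the current point is $0.p_1\ldots p_k x_1x_2\ldots$ and therefore lies within $2^{-k}$ of $p$. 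In the LDB the two halving neighbours of a middle node $m(v)$ are exactly its virtual children $l(v) = m(v)/2$ and $r(v) = (m(v)+1)/2$, reachable over a virtual edge. Since only middle nodes carry such children, after using one we must walk along the sorted cycle from $l(v)$ (resp.\ $r(v)$) to the \emph{nearest middle node} before the next halving step; this ``re-centring'' walk is the only extra ingredient compared to the continuous model, and it is short precisely because $l(v)$ and $r(v)$ already sit on the cycle at the target positions and the linear edges let us crawl from there.

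Concretely the routing is: fix $k = \lceil c'\log n\rceil$ for a suitable constant $c'$; starting from the source $v$, first move to the middle node emulated by the same process (one virtual edge away), or stay if $v$ is already a middle node; then for $i = 1,\ldots,k$ follow the virtual edge to $l(\cdot)$ or $r(\cdot)$ according to bit $p_{k-i+1}$ and then follow linear edges to the nearest middle node; finally walk along linear edges to $pred(p)$. It remains to bound (a) the length of a single re-centring walk, (b) the \emph{total} re-centring cost over all $k$ phases, (c) the positional drift accumulated by re-centring, and (d) the length of the final walk.

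For the analysis we use that $v.id \mapsto m(v).label$ is a publicly known pseudorandom hash, so we may treat the $n$ middle labels as independent and uniform in $[0,1)$; then the $n$ left labels are uniform in $[0,1/2)$, the $n$ right labels uniform in $[1/2,1)$, and the $3n$ virtual nodes together have density $\Theta(n)$ on the cycle. A Chernoff bound gives, w.h.p.\ simultaneously over all arcs, that every arc of length $c\log n/n$ contains at least one middle node and at most $O(\log n)$ middle nodes and $O(\log n)$ virtual nodes. This yields (a): each re-centring walk starts at a position $z_i$ that, up to the drift in (c), is a deterministic function of $p$ and of suffix bits of the source label, hence reaches the nearest middle node after crossing $O(\log n)$ cycle nodes. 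It yields (d): since each halving step shrinks by one the interval containing $p$, after $k = \log n + O(1)$ phases the current middle node is within $O(\log n/n)$ of $p$ and thus only $O(\log n)$ cycle nodes from $pred(p)$. It yields (c): each later phase halves the error of an earlier one, so the accumulated drift is $O(\max_i(\text{re-centring shift})_i) = O(\log n/n)$, absorbed into (d). For (b) we observe that the $i$-th re-centring walk has expected length $\Theta(1)$ (nearest-middle-node distance $\Theta(1/n)$ times cycle density $\Theta(n)$), so the total re-centring cost $\sum_{i\le k} R_i$ has expectation $\Theta(\log n)$ and concentrates at $O(\log n)$ w.h.p.\ because the $R_i$ depend on essentially independent parts of the hash. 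Summing the $k$ halving steps, $\sum_i R_i$, and the $O(\log n)$-step final walk, each step costing $O(1)$ rounds in the synchronous model, gives $O(\log n)$ rounds w.h.p.

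The \textbf{main obstacle} is step (b): the naive per-phase bound from (a) only gives $O(\log^2 n)$, so one must show that the \emph{sum} of the re-centring costs over all $\Theta(\log n)$ phases is $O(\log n)$, i.e.\ that the walks cannot all be long at once. This needs a genuine concentration argument rather than a union bound, exploiting that the landing positions $z_1,\ldots,z_k$ are, up to the $O(\log n/n)$ drift, determined by the fixed target $p$ together with the source's hash bits and are therefore only weakly correlated with the placement of the remaining $n-1$ nodes (and that consecutive ideal positions are $\Theta(1)$-separated, so only few pairs of walks can overlap). Care is also needed to justify the ``pseudorandom $=$ independent uniform'' idealization (e.g.\ via a sufficiently independent hash family) and to confirm that the adversary's choice of $p$ and of the source is independent of the hash. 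Modulo this, the statement is exactly the routing bound of~\cite{DBLP:conf/sss/RichaSS11,DBLP:journals/talg/NaorW07} transported to the LDB.
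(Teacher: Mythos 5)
The paper does not actually prove Lemma~\ref{lemma:LDB:routing}: it simply states that the bound follows ``by adopting the result from''~\cite{DBLP:conf/sss/RichaSS11} (which in turn builds on~\cite{DBLP:journals/talg/NaorW07}), so your from-scratch reconstruction is strictly more than what the paper offers, and it is the right reconstruction. Your decomposition --- exact halving via the virtual edges $m(v)\to l(v)/r(v)$, a re-centring walk along the cycle to the next middle node, a geometric bound on the accumulated positional drift, and a final linear walk to $pred(p)$ --- is precisely how the continuous--discrete De Bruijn analysis transfers to the LDB, and you correctly identify that the entire difficulty sits in step (b): bounding the \emph{sum} of the $\Theta(\log n)$ re-centring walks by $O(\log n)$ rather than the trivial $O(\log^2 n)$. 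That concentration step is exactly where the cited works do their probabilistic work (via smoothness of the node intervals and near-independence of the regions the walks traverse), so flagging it rather than hand-waving it is appropriate. One subtlety you should make explicit when discharging it: for targets $p$ with a long run of identical bits the landing positions $z_1,\dots,z_k$ are \emph{not} $\Theta(1)$-separated --- they cluster geometrically (e.g.\ $z_{i+1}\approx z_i/2$ near $0$) --- so the claim that consecutive ideal positions are well separated needs to be replaced by the observation that the middle-node-free gaps the walks traverse are still essentially distinct (the positions are geometrically spaced, so each walk is charged to a different gap, and the gap lengths are governed by disjoint portions of the hash), which is what rescues the Chernoff-type argument. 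Modulo writing that out carefully (and the standard caveat about replacing ``pseudorandom'' by a sufficiently independent hash family), your proof is correct and self-contained where the paper's is a citation.
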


\subsection{Distributed Hash Table} \label{sec:pre:dht}
In order to store the elements of our queue in a distributed fashion, we use a distributed hash table (DHT) that makes use of consistent hashing: Elements $e \in \mathcal E$ that should be stored in the DHT will be assigned a unique position $p(e) \in \mathbb{N}_0$ by \pname{}.
This position can then be hashed to a real-valued key $k(p(e)) \in [0,1)$ via a publicly known pseudorandom hash function.
A node $v$ is responsible for storing all elements whose keys are within the interval $[v,succ(v))$.
Thus, if we want to insert (resp. delete) an element $e \in \mathcal E$, we only have to search for the node $v$ with $v \leq k(p(e)) < succ(v)$ and tell $v$ to store $e$.
The search for $v$ can be performed in $\mathcal O(\log n)$ rounds according to Lemma~\ref{lemma:LDB:routing}.
We will use the following operations in \pname{}:

\begin{enumerate}
	\item \textsc{Put}($e$, $k$) Inserts the element $e \in \mathcal E$ with key $k$ into the DHT.
	\item \textsc{Get}($k$, $v$): Removes the element with key $k$ from the DHT and delivers it to the initiator $v$ of the request.
\end{enumerate}

It is well known for consistent hashing that it is \emph{fair}, meaning that each node stores the same amount of elements for the DHT on expectation.

\begin{lemma}\label{lemma:dht:fairness}
	Consistent Hashing is fair.
\end{lemma}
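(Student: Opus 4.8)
The plan is to prove Lemma~\ref{lemma:dht:fairness} by a standard balls-into-bins argument on the unit circle $[0,1)$. First I would fix the setting precisely: there are $n$ nodes whose labels $v_1,\dots,v_n$ are the images of their identifiers under the pseudorandom hash function, so we treat them as $n$ independent and uniformly distributed points in $[0,1)$; similarly, each of the $N$ stored elements $e$ is placed at the point $k(p(e))$, which (since $p(e)$ is hashed by an independent pseudorandom hash function) we treat as an independent uniform point in $[0,1)$. Node $v_i$ is responsible for the arc $[v_i, succ(v_i))$, and it stores element $e$ iff $k(p(e))$ falls in that arc.

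The key steps, in order, are: (1) condition on the node labels and observe that the arc lengths $\ell_1,\dots,\ell_n$ (the gaps between consecutive node labels on the cycle) satisfy $\sum_i \ell_i = 1$; (2) note that, conditioned on the labels, each element lands in arc $i$ independently with probability exactly $\ell_i$, so the number of elements $X_i$ stored at $v_i$ is $\mathrm{Binomial}(N,\ell_i)$ and hence $\mathbb{E}[X_i \mid \ell_i] = N\ell_i$; (3) take expectation over the random labels — by symmetry every gap $\ell_i$ has the same distribution, and since the $n$ gaps partition $[0,1)$ we get $\mathbb{E}[\ell_i] = 1/n$ for every $i$; (4) combine via the tower rule to conclude $\mathbb{E}[X_i] = \mathbb{E}[N\ell_i] = N/n$, which is independent of $i$. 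That is exactly the statement that each node stores the same expected number of elements, i.e., consistent hashing is fair.

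I expect the only real subtlety — and the main thing to get right rather than a genuine obstacle — to be the justification of $\mathbb{E}[\ell_i] = 1/n$, which needs the symmetry of the joint distribution of the $n$ uniform points together with the fact that $\sum_i \ell_i = 1$ holds deterministically; linearity of expectation then forces $n\cdot\mathbb{E}[\ell_1] = 1$. One should also be slightly careful that the hash of $v.id$ and the hash of $p(e)$ are modeled as uniform and independent (this is the usual random-oracle assumption on the pseudorandom hash functions, already adopted implicitly when labels were introduced), and that the cyclic wrap-around at $v_{max}$ is handled by the convention $succ(v_{max}) = v_{min}$ so that the arcs genuinely tile the circle. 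No concentration bound is needed here, since fairness as defined in the excerpt is purely about expectations; the $\mathcal{O}(\log n)$-type concentration (no node stores more than an $\mathcal{O}(\log n / n)$ fraction w.h.p.) would follow from a Chernoff bound on $X_i$ plus a union bound, but that is beyond what the lemma asks for.
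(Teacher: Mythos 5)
Your argument is correct, and it is the standard balls-into-bins proof of fairness for consistent hashing; the paper itself offers no proof at all for this lemma --- it simply asserts the statement as ``well known'' --- so there is nothing in the text to compare against, and your write-up fills a genuine gap. All four steps are sound: conditioning on the labels, the binomial distribution of the per-arc load, the exchangeability argument giving $\mathbb{E}[\ell_i]=1/n$ from $\sum_i \ell_i = 1$, and the tower rule. One caveat worth recording if you want the lemma to apply verbatim to \pname{}: your proof assumes the $n$ node labels are i.i.d.\ uniform on $[0,1)$, but in the LDB the three virtual nodes of a process have labels $m(v)$, $m(v)/2$, and $(m(v)+1)/2$, which are deterministic functions of one another, so the $3n$ virtual-node labels are \emph{not} exchangeable and the expected arc lengths of left, middle, and right virtual nodes need not coincide (already for a single process the right node's expected arc is $1/2$ while the other two are $1/4$ each). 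Fairness still holds at the level of \emph{processes}: by symmetry among the $n$ processes, the total length of the three arcs owned by each process has expectation $1/n$, and the rest of your argument goes through unchanged. Since the corollary the paper actually needs is per-process load balance, this is a harmless refinement rather than a flaw, but it deserves a sentence.
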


\section{Enqueue \& Dequeue} \label{sec:enq_deq}
Throughout this paper, a \emph{queue operation} is either an \enqueue{} or a \dequeue request.

The main challenge to guarantee the sequential consistency from Definition~\ref{def:semantics} lies in the fact that messages may outrun each other, since we allow fully asynchronous computations and non-FIFO message delivery.
In a synchronous environment, this would not be a problem.
Another problem we have to solve is that the rate at which nodes issue queue requests may be very high.
As long as we process each single request one by one, scalability cannot be guaranteed.

The general idea behind \pname{} is the following: First, we aggregate batches of queue operations to the leftmost node in the LDB, called \emph{anchor}, by forwarding them to the leftmost neighbor at each hop.
By doing so, every involved node implicitly becomes part of an \emph{aggregation tree}.
The anchor then assigns a position $p \in \mathbb{N}_0$ in the DHT for each queue operation and spreads all positions for the queue operations over the aggregation tree such that sequential consistency (Definition~\ref{def:semantics}) is fulfilled.
Nodes in the aggregation tree then generate \textsc{Put} and \textsc{Get} requests for the respective positions in the DHT.
We describe this approach in more detail now.

\subsection{Operation Batch}
Whenever a node initiates a queue operation, it has to buffer it in its local storage.
We are going to represent the sequence of buffered queue operations by a \emph{batch}:

\begin{definition}[Batch] \label{def:batch}
	A \emph{batch} $B$ (of queue operations) is a sequence $(op_1,\ldots,op_k) \in \mathbb{N}_0^k$, for which it holds that for all odd $i$, $1 \leq i \leq k$, $op_i$ represents the length of the $i$-th enqueue sequence.
	Similarly, for all even $i, 1 < i \leq k$, $op_i$ represents the length of the $i$-th dequeue sequence.
	Denote the batch $(0)$ as \emph{empty}.
\end{definition} 

We are able to \emph{combine} two batches $(op_1,\ldots,op_k)$ and $(op_1',\ldots,op_l')$ by computing $B = (op_1'',\ldots,op_m'')$ with $op_i'' = op_i + op_i'$ and $m = \max\{k,l\}$ (we define $op_i = 0$ if $i > k$ and $op_i' = 0$ if $i > l$).
If a batch $B$ is the combination of batches $A_1,\ldots,A_k$, then we denote $A_1,\ldots,A_k$ as \emph{sub-batches}.
Each node may store two types of batches locally: One batch that is currently being processed and another batch that waits for being processed and acts as the buffer for newly generated queue operations.
For a node $v$, we call the former batch $v.B$ and the latter one $v.W$.
We denote $v$ as the \emph{owner} of the batch $v.B$.

Whenever a node $v$ generates a queue operation $op$, we update the batch $v.W = (op_1,\ldots,op_k)$ in the following way: If $op$ is an \enqueue{} request, then we increment $op_k$ if $k$ is odd, otherwise we add a $1$ to the batch by setting $v.W = (op_1,\ldots,op_k, 1)$.
Similarly, if $op$ is a \dequeue request, we increment $op_k$, if $k$ is even, otherwise we set $v.W = (op_1,\ldots,op_k, 1)$.
By doing so, the batch $v.W$ respects the local order in which queue operations are generated by $v$, which is important for guaranteeing sequential consistency.

\subsection{Aggregation Tree}
All (virtual) nodes in the LDB implicitly form an aggregation tree.
In order to do this, a node $v$ needs to know both, its parent and its child nodes in the tree.
Both depend on whether $v$ is a left, middle or right virtual node (see Figure~\ref{fig:aggregation_tree} for an example).

The parent node $p(v)$ of some node $v$ in the aggregation tree is always the node that is $v$'s leftmost neighbor.
More specifically, if $v$ is a middle virtual node, then $p(v) = l(v)$.
If $v$ is a left virtual node then $p(v) = pred(v)$.
Finally, if $v$ is a right virtual node, then $p(v) = m(v)$.

Next, we describe how a node $v$ knows its child nodes (denoted by the set $C(v)$) in the aggregation tree, assuming that the node set is static (we describe how to handle \join{} and \leave requests in Section~\ref{sec:join_leave}).
If $v$ is a middle virtual node, then either $C(v) = \{r(v), succ(v)\}$ (if $succ(v)$ is a left virtual node) or $C(v) = \{r(v)\}$ (otherwise).
If $v$ is a left virtual node, then either $C(v) = \{m(v), succ(v)\}$ (if $succ(v)$ is a left virtual node) or $C(v) = \{m(v)\}$ (otherwise).
Last, if $v$ is a right virtual node, then $C(v) = \emptyset$.
Intuitively, each node has its next virtual node as a child and also its successor if that successor is a left node.
A right virtual node cannot have a left virtual node as a right neighbor (as the id of a right virtual node is always at least $0.5$ and the id of a left virtual node is always less than $0.5$).

\begin{figure}[ht]
 	\centering
 	\begin{tikzpicture}[main node/.style={circle,draw,align=center, minimum size=0.3cm, inner sep=0pt}]	
     	\node[main node, label={[yshift=-0.85cm]$l(u)$}] (A) at (0,0) {};
     	\node[main node, label={[yshift=-0.85cm]$l(v)$}] (B) at (1.5,0) {};
     	\node[main node, label={[yshift=-0.85cm]$m(u)$}] (C) at (3,0) {};
     	\node[main node, label={[yshift=-0.85cm]$m(v)$}] (D) at (4.5,0) {};
     	\node[main node, label={[yshift=-0.85cm]$r(u)$}] (E) at (6,0) {};
     	\node[main node, label={[yshift=-0.85cm]$r(v)$}] (F) at (7.5,0) {};
     		
     	\draw[<->, line width=1.5pt] (A) to (B);
     	\draw[<->, line width=0.5pt] (B) to (C);
     	\draw[<->, line width=0.5pt] (C) to (D);
     	\draw[<->, line width=0.5pt] (D) to (E);
     	\draw[<->, line width=0.5pt] (E) to (F);
     	\draw[<->, line width=1.5pt, bend left = 60] (A) to (C);
     	\draw[<->, line width=1.5pt, bend left = 60] (B) to (D);
     	\draw[<->, line width=1.5pt, bend left = 60] (C) to (E);
     	\draw[<->, line width=1.5pt, bend left = 60] (D) to (F);
 	\end{tikzpicture}
 	\caption{A LDB consisting of $6$ nodes (corresponding to $2$ processes $u$ and $v$). Bold linear/virtual edges define the corresponding aggregation tree.}
 	\label{fig:aggregation_tree}
\end{figure}
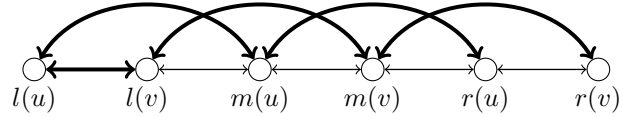

Observe that nodes are able to find their connections in the tree by relying on local information only.
Thus, for the rest of the paper we assume that every node knows its parent and child nodes in the aggregation tree at any time.

From Lemma~\ref{lemma:LDB:routing}, we directly obtain an upper bound for the height of the aggregation tree:

\begin{corollary} \label{cor:tree_height}
	The aggregation tree based on the LDB has height $\mathcal O(\log n)$ w.h.p.
\end{corollary}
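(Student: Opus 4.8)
The plan is to reduce the statement about the aggregation tree's height directly to the routing lemma (Lemma~\ref{lemma:LDB:routing}) by observing that following parent pointers in the aggregation tree mimics exactly the greedy leftward routing used there. First I would note that by the definition of $p(v)$, every edge of the aggregation tree is either a virtual edge (from $m(v)$ to $l(v)$, or from $r(v)$ to $m(v)$) or a linear edge (from a left virtual node $l(v)$ to its predecessor $pred(l(v))$). In all three cases the parent is the leftmost neighbor of $v$ in the LDB, so walking from any node up to the root is a monotone leftward walk along LDB edges that terminates at $v_{min}$, the unique node with no left neighbor of smaller label — i.e., the anchor is the root. Hence the height of the tree equals $\max_v d(v)$, where $d(v)$ is the number of hops on this particular leftward walk from $v$ to the anchor.

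Next I would argue that this leftward walk from $v$ is, up to an additive constant, the same as routing a message from $v$ to the predecessor of $p=0$ in the LDB: both procedures repeatedly move to the leftmost neighbor (virtual edges let one switch between the three nodes a process emulates, linear edges move to the numerically closest node on the left), and both terminate at $v_{min}$. The only subtlety is that the routing scheme from~\cite{DBLP:conf/sss/RichaSS11} underlying Lemma~\ref{lemma:LDB:routing} is the De Bruijn–style routing that uses the bit-shuffling structure of the labels rather than naive single-hop leftward steps; so I would instead invoke the lemma in the form it is stated: routing to the predecessor of any $p \in [0,1)$ takes $\mathcal{O}(\log n)$ rounds w.h.p. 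Taking $p=0$, the predecessor of $0$ in the cyclic order is $v_{max}$, but walking one more virtual/linear hop reaches $v_{min}$; more to the point, what we need is that the tree distance from $v$ to the root is $\mathcal{O}(\log n)$. I would establish this by showing the parent-pointer walk is dominated (in number of hops) by a valid LDB route to $v_{min}$: every time the De Bruijn route would take a "long" shortcut hop, the aggregation-tree walk takes a bounded number of leftward steps covering at least as much ground, because the label of the parent is always at most half the label of a middle node (for $l(v)$) or strictly smaller than $v$ (for linear steps), which is the same geometric contraction exploited in the routing analysis. Therefore $d(v) = \mathcal{O}(\log n)$ w.h.p.

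Finally I would take a union bound over the $3n$ virtual nodes: since each $d(v) = \mathcal{O}(\log n)$ with probability $1 - n^{-c}$ for a constant $c$ that can be made large, and there are only polynomially many nodes, $\max_v d(v) = \mathcal{O}(\log n)$ w.h.p., which is exactly the height of the aggregation tree.

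The main obstacle I anticipate is making precise the claim that the parent-pointer walk is "no longer than" a De Bruijn route to the anchor. The aggregation tree uses only nearest-left-neighbor hops, which individually are short, whereas De Bruijn routing deliberately uses large jumps to achieve logarithmic depth; a priori a chain of nearest-neighbor hops could be very long. The resolution — and the crux of the argument — is that the tree is not a chain of linear edges: each process contributes the virtual edges $r \to m \to l$, and crucially $l(v)$'s label is $m(v)/2$, so a single virtual hop halves the label. Thus from any right node, two virtual hops plus the linear hop out of $l(v)$ already place the walk in the left half $[0,1/2)$ restricted further toward $0$; iterating, the label is (essentially) halved every constant number of tree hops, whence depth $\mathcal{O}(\log n)$. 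Pinning down that "essentially halved" statement rigorously — i.e., that no node's label can linger near a fixed value for more than $\mathcal{O}(\log n)$ leftward steps, which ultimately rests on the same w.h.p. spacing bounds on hashed labels that underlie Lemma~\ref{lemma:LDB:routing} — is the only place real work is needed; everything else is bookkeeping about which of the three node types is the parent.
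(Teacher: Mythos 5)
Your proposal is correct and takes essentially the same route as the paper, which itself offers no argument beyond asserting that the corollary follows from Lemma~\ref{lemma:LDB:routing}: you correctly identify that the parent-pointer walk is a greedy leftward route to the minimum-label node, that the virtual hop $m(v)\to l(v)$ halves the label (giving $\mathcal{O}(\log n)$ halving phases before reaching the anchor's label of roughly $1/n$), and that the residual linear hops between halvings are governed by the w.h.p.\ spacing of the hashed labels. The caveat you flag yourself is the real one --- a naive per-phase bound of $\mathcal{O}(\log n)$ linear hops over $\mathcal{O}(\log n)$ phases only yields $\mathcal{O}(\log^2 n)$, so the amortized $\mathcal{O}(\log n)$ bound must be inherited from the routing analysis of the cited works --- and deferring exactly that step to Lemma~\ref{lemma:LDB:routing} is precisely what the paper does.
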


We are now ready to describe our approach for queue operations in detail, dividing it into $4$ stages.

\subsection{Stage 1: Aggregating Batches}
Every time a node $v$ calls its \textsc{Timeout} (see Algorithm~\ref{algo:phase_1}) method, it checks whether its batch $v.B$ is empty and its batch $v.W$ contains the batches of all of its child nodes in the aggregation tree.
If that is the case, then $v$ transfers the data of $v.W$ to $v.B$ and sends out a message containing the contents of $v.B$ to $p(v)$.
Additionally, $v$ memorizes the sub-batches that are combined in $v.B$ such that it can determine the child node that sent the sub-batch to $v$.
We proceed this way in a recursive manner, until the root node $v_0$ of the aggregation tree, denoted as \emph{anchor} from now on, has received all batches from its child nodes.
Then it combines these batches with its own batch $v_0.W$ into $v_0.B$ and switches to the next stage by locally calling \textsc{Assign} (see Algorithm~\ref{algo:phase_2_4}).

\newcommand{\pushcode}[1][1]{\hskip\dimexpr#1\algorithmicindent\relax}

\begin{algorithm}[ht]
\caption{Stage 1 \Comment{Executed by each node $v$}}
\label{algo:phase_1}
\begin{algorithmic}[1]
\Procedure{Timeout}{}
	\If{$v.B = (0) \wedge v.W$ contains sub-batches from all \\ 
	\pushcode[0] $c \in C(v)$}
		\State $v.B \gets v.W$
		\State $v.W \gets (0)$
		\If{$v$ is the anchor node $v_0$}
			\State \textsc{Assign}($v_0.B$) \Comment{Switch to Stage 2}
		\Else
			\State $p(v) \gets$ \Call{Aggregate}{$v.B$}
		\EndIf
	\EndIf
\EndProcedure
\Procedure{Aggregate}{$B$}
	\State $v.W \gets v.W \cup B$
\EndProcedure
\end{algorithmic}
\end{algorithm}

\subsection{Stage 2: Assigning Positions}
At the anchor $v_0$ we maintain two variables $v_0.first \in \mathbb{N}_0$ and $v_0.last \in \mathbb{N}_0$, such that the invariant $v_0.first \leq v_0.last + 1$ holds at any time.
The interval $[v_0.first, v_0.last]$ represents the positions that are currently occupied by elements of the queue, which implies that the current size of the queue is equal to $v_0.last - v_0.first + 1$.

Now we describe how the anchor processes its batch $v_0.B = (op_1,\ldots,op_k)$ at the start of this stage.
Based on its variables $v_0.first, v_0.last$, $v_0$ computes intervals $[x_1,y_1],\ldots,[x_k,y_k]$ by processing each element in the batch $(op_1,\ldots,op_k)$ in ascending order of their indices $i$.
If $i$ is odd, then $v_0$ sets the interval $[x_i,y_i]$ to $[v_0.last + 1, v_0.last + op_i]$ and increases $v_0.last$ by $op_i$ afterwards.
Similarly, if $i$ is even, then $v_0$ sets the interval $[x_i,y_i]$ to $[v_0.first, \min\{v_0.first + op_i - 1, v_0.last\}]$ and updates $v_0.first$ to $\min\{v_0.first + op_i, v_0.last + 1\}$ afterwards.
By doing so, we assigned an interval to each sequence $op_i$ of requests, implying that we can assign a position to each single queue operation of such a sequence (which is part of the next stage).
Note that in case the queue is empty or does not hold sufficiently many elements and the anchor has to assign positions to some sequence of \dequeue requests of length $k$, it either holds $x_i = y_i + 1$ (if the queue is empty) or $x_i - y_i < k$ for the computed interval $[x_i, y_i]$.

\subsection{Stage 3: Decomposing Position Intervals}
Once $v_0$ has computed all required position intervals $[x_1,y_1],\ldots,[x_k,y_k]$ for a batch, it starts broadcasting these intervals over the aggregation tree, by calling \textsc{Serve} on its child nodes, see Algorithm~\ref{algo:phase_2_4}.
When a node $v$ in the tree receives a collection $[x_1,y_1],\ldots,[x_{k'},y_{k'}]$ of intervals, it decomposes the intervals with respect to each sub-batch $B_1,\ldots,B_l$ of $v.B$ (recall that $v$ has memorized this combination).
Consider a sub-batch $B_i = (op_1,\ldots,op_m)$ of $v.B$.
We describe how $v$ is able to assign a (sub-)interval to each $op_i$.
Assume $i$ is odd for $op_i$ (corresponding to $op_i$ many \enqueue{} requests). 
Then $v$ assigns the (sub\mbox{-})interval $[x_i, x_i + op_i - 1]$ to $op_i$.
Afterwards we update $[x_i,y_i]$ by setting $[x_i,y_i] = [x_i+op_i,y_i]$.
This implies that every \enqueue{} request is assigned a unique position.

Now assume $i$ is even for $op_i$ (corresponding to $op_i$ many \dequeue requests).
Then $v$ assigns the (sub\mbox{-})interval $[x_i, \min\{x_i+op_i-1, y_i\}]$ to $op_i$.
Afterwards we set $[x_i,y_i] = [\min\{x_i+op_i, y_i+1\},y_i]$.
This implies that \dequeue requests are either assigned a position or immediately return $\perp$ in case the interval is not large enough to assign a position to all \dequeue requests.

Once each sub-batch of $v.B$ has been assigned to a collection of (sub\mbox{-})intervals, we send out these intervals to the respective child nodes in $C(v)$.
Applying this procedure in a recursive manner down the aggregation tree yields an assignment of a position to all \enqueue{} and \dequeue requests.

\subsection{Stage 4: Updating the DHT}
Now that a node $v$ knows the exact position $p \in \mathbb{N}_0$ for each of its queue operations, it starts generating \textsc{Put} and \textsc{Get} requests.
For an request \enqueue{e} that got assigned to position $p$, $v$ issues a \textsc{Put}($e$, $k(p)$) request to insert $e$ into the DHT (recall that the key $k(p) \in [0,1)$ is just the real-valued hash of $p$).
This finishes the \enqueue{e} request.
For a \dequeue request that got assigned to position $p$, $v$ issues a \textsc{Get}($k(p)$, $v$) request.
Since in the asynchronous message passing model, it may happen that a \textsc{Get} request arrives at the correct node in the DHT \emph{before} the corresponding \textsc{Put} request, each \textsc{Get} request waits at the node responsible for the position $k$ until the corresponding \textsc{Put} request has arrived.
This is guaranteed to happen, as we do not consider message loss.

Once a node has sent out all its DHT requests, it switches again to Stage 1, in order to process the next queue operations.

\begin{algorithm}[ht]
\caption{Stages 2-4}
\label{algo:phase_2_4}
\begin{algorithmic}[1]
\Procedure{Assign}{$B$} \Comment{Executed by the anchor}
	\State Compute intervals $I = [x_1,y_1],\ldots,[x_k,y_k]$ from $B$
	\State \textsc{Serve}($I$) \Comment{Switch to Stage 3}
\EndProcedure
\Procedure{Serve}{$I$} \Comment{Executed by each node $v$}
	\State Decompose $I$ depending on $C(v)$ and $v.B$
	\ForAll{$c \in C(v)$}
		\State Forward sub-intervals $I_c \subset I$ to $c$ via \textsc{Serve}($I_c$)
	\EndFor
	\State Forward \textsc{Put}/\textsc{Get} requests to the DHT
	\State $v.B \gets (0)$ \Comment{Return to Stage 1}
\EndProcedure
\end{algorithmic}
\end{algorithm}

We defer the analysis of the \enqueue{} and \dequeue requests to Section~\ref{sec:analysis}.

\section{Join \& Leave} \label{sec:join_leave}
When a process enters or leaves the system, this entails several changes to the system in order to get into the state assumed in Section~\ref{sec:enq_deq}:
The DHT has to be updated, which includes movement of data to joining or from leaving nodes, the LDB has to be updated and meanwhile the aggregation tree changes.
To prevent chaos caused by the latter, we handle joins and leaves \emph{lazily}.
This means that a node $v$ joining or leaving the network will be assigned a node $u$ \emph{responsible for} $v$.
$u$ then acts as a representative for $v$ meaning that $u$ takes over $v$'s DHT data and emulates $v$ in the case of $v$ being a leaving node, or relays $v$'s \enqueue{} or \dequeue{} requests in the case of $v$ being a joining node.
Only after a sufficiently large number of nodes has requested to join or leave the system (which is counted at the anchor), the system enters a special state in which no further batches are sent out.
During this state, joining nodes are fully integrated into the system (meaning they do no longer need a node responsible for them) and nodes that left can end being emulated.
In the following, we will specify the details of this.
Keep in mind that a node that requested to join the system and that is not yet fully integrated into the system is called a \emph{joining node} and a node that requested to leave the system and that has not yet left is called a \emph{leaving} node.

Note that if a process $v$ wants to join or leave the network, we have to integrate or disconnect the three nodes $l(v), m(v), r(v) \in V$ into or from the system.
Therefore, we generate a \join{} or \leave{} request for each of these three nodes separately.
In the following we describe how one of these requests is handled.

\subsection{Join} \label{subsec:join}
Assume a node $v$ wants to join the system and further assume $v > v_0$ for now (we will consider the other case separately below).
Then it sends a \join{v} request to a node $w$.
We assume that if node $v$ wants to join the system via \join{v} at node $w$, we route $v$ from $w$ to the node $u$ such that $u < v < succ(u)$ or $succ(u) < u < v$ (in case the edge $(u,succ(u))$ closes the cycle) holds.
We define $u$ to be \emph{responsible for} \join{v}.
$u$ has the following tasks: 
First, it introduces itself to $v$.
Second, it hands over to $v$ all DHT data whose key is in $v$'s interval.
Any \textsc{Put} or \textsc{Get} requests for data with keys in this interval $u$ will forward to $v$ from then on.
Third, $u$ considers $v$ to be a child in its aggregation tree, meaning that $v$ is able to send \enqueue{} or \dequeue{} requests via $u$.
Fourth, $u$ notifies the anchor that there is an additional node that has joined the system.
For this, we extend the notion of a batch $B$ from Definition~\ref{def:batch}, such that it stores an additional number $B.j \in \mathbb{N}_0$ representing the number of \join{} requests that $u$ is responsible for.
Node $u$ proceeds in the same manner as for the queue operations in Section~\ref{sec:enq_deq}: 
It buffers the request in $u.W$ by adding $1$ to $u.W.j$ and once $u.B$ is empty and $u$ has received batches from every child, $u$ transfers all \enqueue{}, \dequeue and \join{} requests stored in $u.W$ to $u.B$ forwards the batch up in the aggregation tree.
Any intermediate node, when combining batches $B_1, \dots B_k$, calculates the sum of the $B_i.j$ values for the combined batch.
This way the anchor learns a lower bound on the total number of joining nodes (note that additional nodes may have requested to join but knowledge of this has not yet reached the anchor).

Note that a node $u$ may become responsible for several joining nodes $v_1, \dots v_k$.
In this case, everything written before still holds with one exception:
Assume $u$ is responsible for nodes $v_1, \dots, v_k$ and becomes responsible for an additional node $v'$ such that a node $v_i$ is the closest predecessor of $v'$.
Then $u$ does not transfer the DHT from itself to $v'$ but issues $v_i$ to transfer the DHT data to $v'$ and sends a reference of $v'$ to $v_i$.
Using this reference, $v_i$ can forward any \textsc{Put} or \textsc{Get} requests that fall within the remit of $v'$.

If the anchor can observe that the number of joining nodes exceeds the number of successfully integrated nodes when processing a batch, it sends the computed intervals down the aggregation tree as usual (c.f.~Section~\ref{sec:enq_deq}), but attaches a flag to the message indicating that the \emph{update phase} should be entered (thus informing all nodes of this).
In this phase, no node will send out a new batch until it has been informed that the update phase is over.
Instead, nodes responsible for other nodes will fully integrate these nodes into the system.
This works in the following way: When a node $u \neq v_0$ in the aggregation tree receives the intervals from its parent node, it proceeds as described in Section~\ref{sec:enq_deq}, i.e., it splits the intervals, forwards intervals to its children and possibly sends out \textsc{Put} and \textsc{Get} requests.
Additionally, $u$ stores the parent $p_{old}(u)$ in the aggregation tree it received the intervals from and all children $C_{old}(u)$ it forwards the intervals to.
This is required because in the update phase the aggregation trees may change, but the acknowledgments that the joining nodes have been integrated successfully need to be aggregated via the old aggregation tree.
That means that as soon as $u$ has integrated all nodes it is responsible for (if any) and received acknowledgments from all nodes in $C_{old}(u)$ (if any), it sends an acknowledgment to $p_{old}(u)$ and forgets $C_{old}(u)$ and $p_{old}(u)$.
$v_0$ behaves similar to any other node $u$, i.e., it also stores its old children, processes \textsc{Put} and \textsc{Get} requests and also starts integrating nodes it is responsible for.
However, when it has finished in doing so, and received all acknowledgments from the nodes in $C_{old}(v_0)$, it propagates down in the new aggregation tree a message indicating that the update phase is over (note that we consider the case of a joining node to the left of the anchor below).
This is safe because it can be shown by induction that when $v_0$ has received acknowledgments from all its children, every node in the tree has finished integrating at least all joining nodes that were joining when the anchor entered the update phase.
Once a node has received an indication that the update phase is over, it starts aggregating and sending out batches again.
We now describe how integrating a joining node works.

Consider a node $u$ that is responsible for $v_1, \dots, v_k$.
W.l.o.g., we assume $u < v_1 < \ldots < v_k < succ(u)$.
$u$ introduces $v_i$ to $v_{i+1}$ and vice versa for all $i \in \{1,\ldots,k-1\}$ and introduces $succ(u)$ to $v_k$ and vice versa.
Finally, $u$ drops its connections to $v_2, \dots, v_k$ and $succ(u)$.
%
%

Note that the nodes $v_i$ already stored their corresponding DHT data from the point when $u$ became responsible for them.
Due to changes in the De Bruijn graph it may happen that \textsc{Put} or \textsc{Get} requests do not need to be routed to the same target as before.
However, if a \textsc{Put} request is at a node $v$ that is not responsible for storing the corresponding element $e$, $v$ must have a neighbor that is closer to the node responsible for storing $e$.
This is because whenever $v$ removes an edge to a neighbor during join, it has learned to know a closer one in the same direction before.
Thus $v$ can forward it into the right direction.
Similarly, if a \textsc{Get} request is at a node $v$ that does not store the desired element $e$, $v$ can wait until it either stores $e$ or until it has learned to know a node that is closer to the target than itself.
Since eventually our procedure forms the correct De Bruijn topology, these requests will be answered.

\paragraph{Updating the Anchor} \label{sec:join:anchor_update}
We now consider the special case, where at least one new node $v$'s label is smaller than the label of the current anchor $v_0$.
Then the node responsible for $v$ is the node $u$ with maximum label, i.e., $u = pred(v_0)$.
$u$ behaves as described before.
However, when $v_0$ has received all acknowledgments from it children and integrated the nodes it is responsible for, it does not send out the message indicating that the update phase is over (note that $v_0$ can determine that a node $v < v_0$ has joined because its neighborhood to the left has changed).
Instead, $v_0$ searches for the leftmost node $v_0'$ and and then transfers its interval $[v_0.first, v_0.last]$ to $v_0'$.
From that point on, $v_0'$ will behave as the new anchor and send the message indicating that the update phase is over down in the new aggregation tree.

\subsection{Leave} \label{subsec:leave}
The general strategy for leaves is the following:
For each leaving node $v$, the process emulating the left neighbor $u$ of $v$ creates a virtual node $v'$ that acts as a replacement for $v$, i.e., $v'$ will store $v$'s DHT data, be responsible for the nodes $v$ was responsible for and have the same connections as $v$ had.
As soon as this replacement has been created, the corresponding edges have been established, the edges to $v$ have been removed, and all messages on their way to $v$ have been delivered and successfully forwarded from $v$, $v$ is safe to leave the system and does so.
The challenge is to deal with neighboring leaving nodes: 
If $v$ has a neighbor that is also leaving, then this neighbor does not want to establish a new edge, which might result in a deadlock situation.
Thus, we have to prioritize leaves: 
Whenever two neighboring nodes $u$ and $v$ determine that they both want to leave, the one with the higher identifier postpones its attempt to leave until the other one has left the system.
Since in any case there is a unique leftmost leaving node, there will always be a node that can leave the system, which inductively yields that all nodes eventually leave.
To enable this, each node that calls \leave{} first asks all its left neighbors if it is allowed to do so.
Only if all of them acknowledge, it starts the actual procedure to leave.
Note that a node $u$ that acknowledged a right neighbor $v$ that it may leave and becomes leaving afterwards has to wait with actually executing \leave{} until that node has left (i.e., was replaced by a replacement).

One may ask how a leaving node $v$ can determine that it has received and successfully forwarded all messages sent to it to $v'$.
Therefore, we additionally assume that for each message sent via an edge in the system, an acknowledgment is sent back to the sending node (except for acknowledgments, for obvious reasons).
Each node then stores, for each edge, the number of acknowledgments it is still waiting for.
We assume also that a node knows all other nodes with incoming connections to it (this can, e.g., be achieved in that each node that establishes a new edge first introduces itself and waits for an acknowledgment before it uses the edge for any other messages).
Then, $v$ can ask all nodes with incoming connections to inform $v$ once they have received all acknowledgments for messages sent to $v$.
Once $v$ has received all responses, it knows that it does not receive any more messages.
After forwarding the received messages to $v'$ and receiving all acknowledgments for those, it knows it is safe to leave.

A left, middle, or right virtual node $u$ that created a replacement $v'$ for its right neighbor $v$ is called the node \emph{responsible for} $v'$.
Note that $v'$ may receive an additional \leave request from a node $w$.
In this case, the process emulating $u$ would spawn an additional node $w'$ and everything is carried out as though $v'$ were a normal node.
However, we say that $u$ is also responsible for $w'$.
This way a left, middle, or right virtual node may become responsible for a number of nodes. 
Similar to the joining of nodes, a node $u$ responsible for at least another node sends an additional number $B.l \in \mathbb{N}_0$ in the batch $B$ it sends out next, representing the number of \leave requests that $u$ has become responsible for since it last sent out a batch.

The rest is analogous to the join case:
As soon as the number of leave requests falls below half of the number of nodes emulated, the anchor initiates the update phase during which each node $u$ responsible for a set of nodes $v_1, \dots, v_k$ deletes these nodes and updates the De Bruijn Graph accordingly.
Once all acknowledgments for this have been propagated up in the tree, the update phase is left again.
Note that both joins and leaves may be handled in the same update phase.

On a sidenote, one may ask what happens if a joining node $v$ joins at some node $w$ that is currently in the process of leaving.
While $w$ is alive and has edges to some non-leaving nodes, $w$ can forward $v$ such that $v$ stays in the system.
However, once $w$ has left the system and is not alive anymore, $v$ cannot join the system through $w$.
Still, $v$ can detect if $w$ is not active anymore and then try joining the system from another node.

\paragraph{Updating the Anchor}
When $v_0$ wants to leave, we proceed similar as for the join case: 
$pred(v_0)$ will become the node responsible for $v_0$ and perform the duties of the anchor and at the very end of the update phase, the anchor information is transferred to the node that then has the minimum identifier.

\section{Analysis} \label{sec:analysis}
To prove that \pname{} implements a distributed queue according to Definition~\ref{def:semantics}, we define a total order on the \enqueue{} and \dequeue requests.
To do so we specify an algorithm that assigns each request a unique value from $\mathbb{N}$:
 First, initialize a virtual counter $c$ at the anchor with $1$ as its initial value (this value is transferred if the anchor is changed due to a join or a leave).
 We assign a virtual counter to each \enqueue{} or \dequeue request $op$ in the following way: 
 Recall that when $op$ is initiated, it causes the increase of an $op_i$ value of one batch $B$.
 Virtually assign to $value(op)$ the new value of $op_i$.
 We also say that $op$ \emph{belongs to $B$ at index $i$}.
 When $B$ is combined with another batch on its way up in the aggregation tree, choose one of the batches as the first one and one as the second one. 
 If $B$ is the second one, let $op_i'$ be the $i$-th entry of the other batch and add $op_i'$ to $value(op)$.
 In any case, $op$ belongs to the combined batch afterwards.
 Proceed in this way for every combination of batches up to the anchor.
 When the anchor processes the batch $(op_1'',\ldots,op_{k''}'')$ which $op$ belongs to, add $c + \sum_{j=1}^{i-1}op_j$ to $value(op)$.
 Afterwards, the anchor updates $c$ by $\sum_{j=1}^{k''}op_j$.
 Intuitively, imagine the anchor would process every request individually:
 Then it would first consider all $op_1''$ \enqueue{} requests, then all $op_2''$ \dequeue requests, and so on.
 The final value of $op$ would then be the number of requests that the anchor has served up to (and including) $op$.
 
 Observe that the values are unique.
 In the following, let $\prec$ be the order defined by the values given this way.
 The following lemmas follow from the protocol description (check the way we assigned values to the requests and how the intervals are assigned to the requests):
 \begin{lemma}\label{lem:deqdeq}
  If, for two \dequeue requests \deq{u}{i}, \deq{v}{j} that get assigned positions, $pos_a, pos_b$, respectively, $\deq{u}{i} \prec \deq{v}{j}$, then $pos_a < pos_b$.
 \end{lemma}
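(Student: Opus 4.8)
The plan is to prove a stronger, clean invariant from which the lemma follows at once: if $f_0$ denotes the value of $v_0.first$ at the very start of the execution (this value is simply passed along whenever the anchor changes due to a \join{} or \leave), then the \dequeue requests that ever get assigned a position, enumerated in $\prec$-increasing order as $op^{(1)}, op^{(2)}, \dots$, receive the positions $f_0, f_0+1, f_0+2, \dots$ respectively, i.e.\ $op^{(m)}$ gets position $f_0+m-1$. Given this, the claim is immediate: $\deq{u}{i}\prec\deq{v}{j}$ means $\deq{u}{i}=op^{(m)}$ and $\deq{v}{j}=op^{(m')}$ with $m<m'$, hence $pos_a=f_0+m-1 < f_0+m'-1 = pos_b$.

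To establish the invariant I would first analyse a single dequeue sequence at the anchor. When the anchor processes a batch $B=(op_1'',\dots,op_{k''}'')$ and reaches an even index $i$, it sets $[x_i,y_i]=[v_0.first,\min\{v_0.first+op_i''-1,v_0.last\}]$ and then advances $v_0.first$ to $v_0.first + s_i$, where $s_i=\min\{op_i'',\, v_0.last-v_0.first+1\}$ is exactly the number of integers in $[x_i,y_i]$ (the invariant $v_0.first\le v_0.last+1$ guarantees $s_i\ge 0$). I would then show, by induction down the aggregation tree on the Stage~3 decomposition of $[x_i,y_i]$, that the sub-interval handed to a sub-batch is always a \emph{prefix} of the interval currently held; hence the dequeues of this combined sequence that receive positions are precisely the first $s_i$ of them in the combination order used in Stage~3, and they receive $x_i, x_i+1, \dots, x_i+s_i-1$ in that order, while all later ones immediately return $\perp$. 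The key point is that the order in which Stage~3 carves sub-intervals among sub-batches coincides with the ``first/second'' order used in the value-assignment algorithm, so this within-sequence order is exactly $\prec$ restricted to the sequence; consequently the dequeue at offset $o$ (when it gets a position) is assigned $x_i+o-1$, and $x_i$ equals the value of $v_0.first$ at the moment index $i$ is processed.

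Next I would glue the per-sequence statements together. Within one batch $v_0.first$ is modified only at even indices and is non-decreasing, so the starting points $x_i$ of the successive dequeue sequences increase, the positions assigned by sequence $i$ all lie strictly below the starting point of sequence $i'>i$, and all dequeues of sequence $i$ are $\prec$-before all dequeues of sequence $i'$ since a smaller even index yields a smaller $value$. Across batches, the anchor's counter $c$ and the variable $v_0.first$ only increase; every request of an earlier-processed batch gets a strictly smaller value than every request of a later one; and $v_0.first$ at the start of a batch equals $f_0$ plus the total number of dequeues that have already received positions. Combining all this, enumerating the assigned dequeues sequence-by-sequence in processing order and, within each sequence, in offset order is precisely the $\prec$-order, and the positions handed out along this enumeration are the consecutive integers $f_0, f_0+1, f_0+2, \dots$, which is the invariant.

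The main obstacle I anticipate is the bookkeeping in the induction of the middle step: making fully precise that Stage~3 always splits an interval from the front in exactly the order in which the value-assignment algorithm nests its ``first/second'' choices, and that $v_0.first$ advances by precisely the number of positions actually handed out (not by the nominal length $op_i''$), including the boundary case where a sequence exhausts the queue and some of its dequeues return $\perp$. Once the front-prefix property and the exact advance of $v_0.first$ are pinned down, the remainder is a routine case analysis, and \join{}/\leave operations cause no trouble because the counter and the $first/last$ variables are transferred intact to the new anchor.
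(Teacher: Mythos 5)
Your proof is correct and is essentially the argument the paper intends: the paper gives no explicit proof of this lemma, merely asserting that it ``follows from the protocol description'' by comparing how values and how intervals are assigned, which is precisely what you make rigorous via the stronger invariant that the position-receiving \dequeue requests, enumerated in $\prec$-order, receive consecutive positions starting at the initial value of $v_0.first$. You also correctly isolate the one point that genuinely needs care---that the ``first/second'' choice in the definition of $value(\cdot)$ must be taken consistently with the prefix order of the Stage~3 interval decomposition---and since that choice is left free in the paper's construction of $\prec$, fixing it this way is legitimate.
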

 \begin{lemma}\label{lem:enqenq}
  If, for two \enqueue{} requests \enq{u}{i}, \enq{v}{j} that get assigned positions, $pos_a, pos_b$, respectively, $\enq{u}{i} \prec \enq{v}{j}$, then $pos_a < pos_b$.
 \end{lemma}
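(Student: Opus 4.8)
The plan is to derive, for an arbitrary \enqueue{} request $op$, closed forms for both $value(op)$ and its assigned position, in terms of the batch $op$ belongs to when the anchor handles it, and then to compare the two. Track $op$ up the aggregation tree: at the moment the anchor processes the fully combined batch $B=(op_1'',\dots,op_{k''}'')$ that $op$ belongs to, $op$ sits at some odd index $i$, and among the $op_i''$ requests belonging to $B$ at index $i$ it has a well-defined \emph{local index} $\ell\in\{1,\dots,op_i''\}$. Indeed, a combine step that designates $op$'s current batch as the ``second'' one adds the ``first'' batch's $i$-th entry to $value(op)$ while the combined $i$-th entry grows by exactly that amount; hence the partial value always lies in $\{1,\dots,(\text{current }i\text{-th entry})\}$, and an easy induction shows the $op_i''$ requests that belong to $B$ at index $i$ receive precisely the values $1,\dots,op_i''$. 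Writing $c_B$ for the value of the anchor's counter $c$ when it processes $B$, the assignment rule from Section~\ref{sec:analysis} then gives $value(op)=c_B+\sum_{r<i}op_r''+\ell$.

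Now I would do the analogous computation for positions. When the anchor processes $B$, its odd entry $op_i''$ is handed the interval $[\mathit{last}_B^{(i)}+1,\ \mathit{last}_B^{(i)}+op_i'']$, where $\mathit{last}_B^{(i)}$ denotes the value of $v_0.last$ just before that entry is processed. In Stage~3 this interval is split among the sub-batches, and for an odd entry each sub-batch in turn receives the current \emph{prefix} of what remains; iterating this down the tree, the request with local index $\ell$ ends up at position exactly $\mathit{last}_B^{(i)}+\ell$, i.e.\ $pos(op)=\mathit{last}_B^{(i)}+\ell$. The one point needing genuine care — and what I expect to be the main obstacle — is the consistency of two orderings: the left-to-right order in which Stage~3 peels sub-intervals off an interval must coincide with the first/second order used in the combine operations on the way up (which is the order in which the counter was accumulated). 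The protocol's memorization of how batches were combined is exactly what guarantees this, but it is the step that makes the ``follows from the protocol description'' remark non-vacuous.

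Finally I would combine the two formulas with two monotonicity facts: (a)~$v_0.last$ is non-decreasing over the execution and is incremented only by \enqueue{} sequences, so if a batch $B$ is processed before a batch $B'$ then $\mathit{last}_B^{\mathrm{end}}\le\mathit{last}_{B'}^{(q)}$ for every entry $q$ of $B'$; and (b)~right after processing $B$ the anchor sets $c\gets c_B+\sum_r op_r''$, which equals the largest value handed out inside $B$, so every request of an earlier-processed batch has strictly smaller value than every request of a later-processed one (and within one batch, value is strictly increasing in $(i,\ell)$ lexicographically). Now take \enqueue{} requests with $\enq{u}{i}\prec\enq{v}{j}$; say $\enq{u}{i}$ belongs to $B$ at odd index $p$ with local index $\ell$ and $\enq{v}{j}$ to $B'$ at odd index $q$ with local index $\ell'$, so $pos_a=\mathit{last}_B^{(p)}+\ell$ and $pos_b=\mathit{last}_{B'}^{(q)}+\ell'$. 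If $B=B'$, the value inequality forces $(p,\ell)$ to precede $(q,\ell')$ lexicographically: for $p=q$ we get $\ell<\ell'$ and $pos_a<pos_b$ at once; for $p<q$ we have $\mathit{last}_B^{(q)}\ge\mathit{last}_B^{(p)}+op_p''\ge\mathit{last}_B^{(p)}+\ell$, hence $pos_b=\mathit{last}_B^{(q)}+\ell'>pos_a$. If $B\ne B'$, fact (b) forces $B$ to be processed before $B'$, and then $pos_a\le\mathit{last}_B^{(p)}+op_p''\le\mathit{last}_B^{\mathrm{end}}\le\mathit{last}_{B'}^{(q)}<\mathit{last}_{B'}^{(q)}+\ell'=pos_b$. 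In every case $pos_a<pos_b$. (Lemma~\ref{lem:deqdeq} is handled identically with $v_0.first$ in place of $v_0.last$, the only extra wrinkle being that there one restricts to \dequeue requests that actually receive a position.)
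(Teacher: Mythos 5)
Your proof is correct and is exactly the elaboration the paper gestures at when it says the lemma ``follows from the protocol description'': you derive closed forms $value(op)=c_B+\sum_{r<i}op_r''+\ell$ and $pos(op)=\mathit{last}_B^{(i)}+\ell$ and compare them via the monotonicity of $v_0.last$ and of the counter $c$. You also correctly isolate the one non-trivial hypothesis the paper leaves implicit, namely that the order in which Stage~3 peels sub-intervals off an interval must agree with the first/second order used when combining sub-batches on the way up (which is what the memorization of sub-batches is for); no gaps beyond that.
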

 \begin{lemma}\label{lem:deqenq_enqdeq}
   If a \dequeue request \deq{u}{i} gets assigned a position $pos_a$, then for every \enqueue{} request \enq{v}{j} with $value(\deq{u}{i}) < value(\enq{v}{j})$ the position $pos_b$ assigned to it satisfies $pos_b > pos_a$.
   Likewise, if an \enqueue{} request \enq{u}{i} gets assigned a position $pos_a$, then for every \dequeue request \deq{v}{j} with $value(\deq{v}{j}) < value(\enq{u}{i})$ the position $pos_b$ assigned to it satisfies $pos_b < pos_a$.
 \end{lemma}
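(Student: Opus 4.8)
The plan is to reduce Lemma~\ref{lem:deqenq_enqdeq}---and, with essentially no extra work, Lemmas~\ref{lem:deqdeq} and~\ref{lem:enqenq}---to elementary bookkeeping of one quantity, the anchor's counter $v_0.last$. The first and main step is to establish the following correspondence: the positions the protocol hands out are exactly those produced by a \emph{virtual sequential execution} in which the anchor processes \emph{all} \enqueue{} and \dequeue requests ever issued, over all nodes and all batches, one at a time in increasing order of $value(\cdot)$, maintaining the pointers $v_0.first$ and $v_0.last$: an \enqueue{} receives position $v_0.last+1$, after which $v_0.last$ is incremented; a \dequeue with $v_0.first\le v_0.last$ receives position $v_0.first$, after which $v_0.first$ is incremented, and otherwise it returns $\perp$. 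Verifying this is exactly ``checking the way we assign values and intervals'': when the anchor processes a combined batch $(op_1'',\dots,op_{k''}'')$ it treats the indices in ascending order, carving out the interval $[v_0.last+1,\,v_0.last+op_i'']$ at an odd index $i$ (an \enqueue{} run) and $[v_0.first,\,\min\{v_0.first+op_i''-1,\,v_0.last\}]$ at an even index $i$ (a \dequeue run), and then advancing $v_0.last$ (resp.\ $v_0.first$) by precisely the amount a run of $op_i''$ consecutive operations of that type advances it in the virtual execution. By the value-assignment rule the $op_i''$ requests at index $i$ of the combined batch are precisely those whose values form the block $\{c+\sum_{j<i}op_j''+1,\dots,c+\sum_{j<i}op_j''+op_i''\}$, and Stage~3 spreads the positions of each carved interval down the aggregation tree in $value$-order, since every node decomposes an incoming interval among its sub-batches in the same recorded order in which it combined them (the order the value rule uses to offset the ``second'' sub-batch). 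Since $v_0.first$, $v_0.last$ and $c$ persist across batches and are carried over when the anchor changes, the correspondence is global.

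Two trivial facts about this virtual execution complete the reduction. First, $v_0.last$ is non-decreasing over the whole run, since it changes only when an \enqueue{} run is processed, and then it only grows. Second, $value$ respects the order in which the anchor does its work: if $d$ is a \dequeue at an even index $i_d$ of a batch $B$ and $e$ is an \enqueue{} at an odd index $i_e$ of a batch $B'$, then $value(d)<value(e)$ forces either that $B$ is processed strictly before $B'$, or that $B=B'$ and $i_d<i_e$. Indeed, every value produced by a later batch exceeds every value of an earlier one, because the anchor increases $c$ by the full batch size after each batch; and inside one batch $value$ is strictly increasing in the index and, for a fixed index, in the within-index offset, while a \dequeue and an \enqueue{} can never share an index, using that the offset of a request at index $j$ lies in $\{1,\dots,op_j''\}$. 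In both cases, by the non-decrease of $v_0.last$, its value at the moment the anchor processes $d$ is at most its value just before the anchor processes $e$.

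The conclusion then assembles itself. By the correspondence, a \dequeue $d$ assigned a position $pos_a$ gets it at a moment when $v_0.first\le v_0.last$ and $pos_a=v_0.first\le v_0.last$, so $pos_a$ is at most $v_0.last$ at that moment; and an \enqueue{} $e$ assigned a position $pos_b$ gets $pos_b=v_0.last+1$, which strictly exceeds $v_0.last$ just before $e$ is processed. For the first half of the lemma, with $value(d)<value(e)$: $e$ is processed no earlier than $d$, so $v_0.last$ just before $e$ is at least its value when $d$ was processed, whence $pos_b>(v_0.last\text{ when }d\text{ is processed})\ge pos_a$. For the second half, let $d$ be any \dequeue with $value(d)<value(e)$ that is assigned a position $pos_b$ (the statement is vacuous for a \dequeue returning $\perp$): then $pos_b\le(v_0.last\text{ when }d\text{ is processed})\le(v_0.last\text{ just before }e)<(v_0.last\text{ just before }e)+1=pos_a$, so $pos_b<pos_a$. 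The main obstacle is the correspondence in the first step---concretely, that Stage~2's interval carving together with Stage~3's tree decomposition really does hand out positions in $value$-order, which hinges on each node decomposing its intervals in the same sub-batch order that underlies the definition of $value$; once that is granted, everything reduces to arithmetic with the two monotone pointers, and for this lemma in particular only $v_0.last$ is needed.
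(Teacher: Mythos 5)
Your proof is correct and follows exactly the route the paper gestures at (the paper offers no written proof beyond ``follows from the protocol description''): you reduce everything to the anchor's sequential Stage-2 processing in $value$-order, the invariant that a dequeue's position is at most $v_0.last$ at its processing moment while an enqueue's position strictly exceeds $v_0.last$ just before its own, and the monotone non-decrease of $v_0.last$. You also correctly isolate the one point that genuinely needs checking---that Stage~3 distributes each carved interval among sub-batches in the same order used to define $value$---so nothing is missing.
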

 \begin{lemma}\label{lem:deq_sequence}
  Assume there is a sequence of \dequeue requests $deq_1, \dots, deq_k$ that belong to the same batch $B$ and the same index $l$ such that $value(deq_1) < \dots < value(deq_k)$.
  If $deq_i$ returns $\bot$ for some $i \in \{1, \dots, k\}$, then all $deq_j$ with $i < j \leq k$ also return $\bot$ and $[v_0.first, v_0.last]$ is empty after index $l$ of batch $B$ has been processed in $v_0$ in Stage~2. 
 \end{lemma}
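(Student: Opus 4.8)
The plan is to track how the interval $[x_l, y_l]$ associated with index $l$ of batch $B$ is created at the anchor in Stage~2 and then successively shrunk as it is decomposed down the aggregation tree in Stage~3, and to argue that the $\bot$-returning dequeues are exactly a suffix of $deq_1, \dots, deq_k$ in the $value$-order. The key observation is that the $value$ of a dequeue request equals the offset at which its assigned position would sit inside the interval $[x_l, y_l]$ the anchor computed for the whole index-$l$ sequence: by construction of $value$ (summing entries of combined batches, then adding $c + \sum_{j<l} op_j$ at the anchor), the requests $deq_1, \dots, deq_k$ receive consecutive $value$s, and the anchor, in Stage~2, sets $[x_l, y_l] = [v_0.first, \min\{v_0.first + op_l - 1, v_0.last\}]$, so that the $t$-th request in $value$-order among these $k$ would claim position $v_0.first + t - 1$ if and only if that value is $\le v_0.last$, i.e. if and only if $t \le v_0.last - v_0.first + 1$.

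Concretely, I would proceed as follows. First, I would establish by induction over the path from the anchor to the owner of (the relevant sub-batch of) $B$ that the decomposition rule for the even index $l$ in Stage~3 is \emph{prefix-consistent}: when a node splits its incoming sub-interval $[x_l, y_l]$ for index $l$ among its sub-batches, it hands the first $\min\{op_l, \text{available}\}$ positions to the first sub-batch, the next block to the second, and so on, and sets the remainder to start at $\min\{x_l + op_l, y_l+1\}$, which becomes $y_l + 1$ (the ``empty'' marker) precisely once the positions run out. Hence a request gets a real position iff its rank (in $value$-order) within the index-$l$ sequence is at most the length $y_l - x_l + 1$ of the interval the anchor originally produced, and it returns $\bot$ otherwise. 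Second, since $value(deq_1) < \dots < value(deq_k)$ are consecutive and correspond exactly to ranks $1, \dots, k$ within that sequence, if $deq_i$ returns $\bot$ then its rank exceeds $y_l - x_l + 1$, hence so does the rank of every $deq_j$ with $j > i$, so all of them return $\bot$ as well. Third, for the claim that $[v_0.first, v_0.last]$ is empty after index $l$ is processed at the anchor: if some request of the index-$l$ sequence returns $\bot$, then $op_l > v_0.last - v_0.first + 1$, so the anchor's update sets $v_0.first \gets \min\{v_0.first + op_l, v_0.last + 1\} = v_0.last + 1$, which by the maintained invariant $v_0.first \le v_0.last + 1$ means the queue is empty at that point.

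I expect the main obstacle to be the bookkeeping in the induction of the first step: one must carefully match the summation defining $value$ (which accumulates contributions from the ``second'' batch at each combination, plus the anchor's prefix sum) against the left-to-right order in which Stage~3 doles out sub-intervals to sub-batches, and check that ``first/second batch'' choices made independently in the two places are consistent — or, more robustly, argue that the final position (and hence the $\bot$/non-$\bot$ verdict) is independent of those tie-breaking choices because it only depends on the total count of index-$l$ dequeue requests ordered before a given one. Everything else is then a short consequence of the monotone shrink-to-empty behavior of the interval update rule and the invariant $v_0.first \le v_0.last + 1$ at the anchor.
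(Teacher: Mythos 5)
Your proof is correct and follows exactly the route the paper intends: the paper gives no explicit proof of this lemma, only the remark that it ``follows from the protocol description (check the way we assigned values to the requests and how the intervals are assigned to the requests)'', and your argument is precisely that check carried out in detail --- prefix-consistent interval decomposition down the tree, rank-versus-interval-length as the criterion for returning $\bot$, and the update $v_0.first \gets \min\{v_0.first + op_l, v_0.last+1\} = v_0.last+1$ whenever the interval is exhausted. The one point you flag as an obstacle, matching the ``first/second batch'' choice in the definition of $value$ against the left-to-right order of the Stage~3 decomposition, is not a real gap: the value assignment is purely an analytical device, so the first/second designation can simply be chosen to agree with the order in which each node doles out sub-intervals to its memorized sub-batches.
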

This lemma directly implies: 
 \begin{corollary}\label{cor:deqenq}
  If a \dequeue request \deq{u}{i} returns $\bot$ then every \deq{v}{j} request with $value(\deq{u}{i}) < value(\deq{v}{j})$ does not return an element $e$ added by an \enqueue{} request \enq{w}{k} with $value(\enq{w}{k}) < value(\deq{u}{i})$.
 \end{corollary}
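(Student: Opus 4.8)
\emph{Proof proposal.} The plan is to argue by contradiction. Suppose there are requests \deq{v}{j} and \enq{w}{k} with $value(\enq{w}{k}) < value(\deq{u}{i}) < value(\deq{v}{j})$ such that \deq{v}{j} returns the element $e$ inserted by \enq{w}{k}. I would reduce the claim to a comparison of the positions the protocol assigns to these two requests: by Stage~4 a \dequeue{} assigned position $p$ issues \textsc{Get}$(k(p),\cdot)$, an \enqueue{$e$} assigned position $p$ issues \textsc{Put}$(e,k(p))$, and a \textsc{Get} is answered only by the \textsc{Put} carrying the same position; hence \deq{v}{j} returning $e$ forces $pos(\deq{v}{j}) = pos(\enq{w}{k})$ (in particular \deq{v}{j} is assigned a position). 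So it suffices to show $pos(\deq{v}{j}) \neq pos(\enq{w}{k})$.

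First I would locate \deq{u}{i} in the anchor's processing. Let $B$ be the batch and $l$ the (even) index with \deq{u}{i} belonging to $B$ at index $l$, and let $deq_1,\dots,deq_m$ be all \dequeue requests belonging to $B$ at index $l$, ordered by $value$. Since \deq{u}{i} returns $\bot$, Lemma~\ref{lem:deq_sequence} yields: (a)~every $deq_r$ with $value(deq_r) > value(\deq{u}{i})$ also returns $\bot$, and (b)~$[v_0.first,v_0.last]$ is empty right after index $l$ of $B$ has been processed in Stage~2. Let $t^*$ be the value of the virtual counter at that moment (equivalently, the largest $value$ of a request at index $l$ of $B$), so $value(\deq{u}{i}) \leq t^*$. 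I then claim $value(\deq{v}{j}) > t^*$: since \deq{v}{j} returns an element, by (a) it is not a $deq_r$ of value larger than $value(\deq{u}{i})$, and every other $deq_r$ has value at most $value(\deq{u}{i}) < value(\deq{v}{j})$, so $\deq{v}{j}\notin\{deq_1,\dots,deq_m\}$; and \deq{v}{j} was not processed before index $l$ of $B$ either, since any such request has value below that of every $deq_r$, in particular below $value(\deq{u}{i}) < value(\deq{v}{j})$. Hence \deq{v}{j} is processed strictly after index $l$ of $B$.

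It remains to compare positions. Because $value(\enq{w}{k}) < value(\deq{u}{i}) \leq t^*$, the position $pos(\enq{w}{k})$ had been added to the occupied interval $[v_0.first,v_0.last]$ at or before time $t^*$ (an \enqueue{} at position $p$ raises $v_0.last$ to at least $p$, and $v_0.last$ never decreases), so by (b), at time $t^*$ we have $v_0.first > pos(\enq{w}{k})$. Since $v_0.first$ is non-decreasing and $value(\deq{v}{j}) > t^*$, the position that the successful \dequeue \deq{v}{j} receives, namely the value of $v_0.first$ at the start of the even index it belongs to (or larger), is at least the value of $v_0.first$ at time $t^*$, hence strictly larger than $pos(\enq{w}{k})$. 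This gives $pos(\deq{v}{j}) \neq pos(\enq{w}{k})$, the desired contradiction. (The last step can alternatively be phrased by counting: by (b) the \enqueue{} and successful \dequeue requests of value $\leq t^*$ fill exactly the same contiguous block of positions, \enq{w}{k} lies in that block, and \deq{v}{j}, being a later successful \dequeue, gets a position above it by Lemmas~\ref{lem:enqenq} and~\ref{lem:deqdeq}.)

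The step I expect to be the main obstacle is the middle one: \deq{u}{i} itself is assigned no position, so Lemmas~\ref{lem:deqdeq}--\ref{lem:deqenq_enqdeq} cannot be applied to it directly, and the argument must instead be routed through the ``queue momentarily empty'' fact (b) together with a careful use of clause (a) to push \deq{v}{j} past the emptying point $t^*$. In particular one has to be precise that the anchor processes the batch $B$ index by index, so that ``processed before index $l$ of $B$'' genuinely translates into a strict bound on $value$; once that is nailed down, the remaining position comparison is a short monotonicity (or counting) argument.
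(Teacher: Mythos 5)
Your proposal is correct and follows the route the paper intends: the paper offers no explicit proof (it simply asserts that the corollary ``directly'' follows from Lemma~\ref{lem:deq_sequence}), and your argument is exactly the missing deduction --- using part (b) of that lemma to conclude $v_0.first > pos(\enq{w}{k})$ once index $l$ of $B$ has been processed, the monotonicity of $v_0.first$ to push every later successful \dequeue{} to a strictly larger position, and the fact that matching happens via equal DHT positions. Your closing remark is also apt: since \deq{u}{i} itself receives no position, Lemmas~\ref{lem:deqdeq}--\ref{lem:deqenq_enqdeq} cannot be invoked directly, and routing the argument through the ``queue momentarily empty'' state is indeed the step the paper leaves implicit.
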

\begin{lemma}\label{lem:prev_deq_always_exists}
 If a \dequeue request \deq{u}{i} gets assigned a position $pos$, then for every \enqueue{e} request that received a position $pos' < pos$ there exists a \deq{v}{j} request with $value(\deq{v}{j}) < value(\deq{u}{i})$ that returns $e$.
\end{lemma}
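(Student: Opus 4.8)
The plan is to argue entirely at the level of the anchor's Stage~2 processing, exploiting that dequeue positions are handed out in a very controlled way. The first step is to isolate the relevant invariant on the anchor's counters. Whenever the anchor processes an even index of a combined batch carrying $op_i$ dequeues, it hands out the interval $[v_0.first,\min\{v_0.first+op_i-1,v_0.last\}]$ and then advances $v_0.first$ by exactly the length of that interval. Lemma~\ref{lem:deqdeq} says that within such an index the dequeues receive the positions of this interval in $\prec$-order, and Lemma~\ref{lem:deq_sequence} says the overflowing ones (those that would lie beyond $v_0.last$) are exactly the ones returning $\bot$; together these imply that the positions of that index are distributed among the non-$\bot$ dequeues of that index \emph{consecutively and without gaps}, down through every level of the aggregation tree. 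By induction over the anchor's processing steps one then obtains the invariant: at any time, the set of positions already assigned to (non-$\bot$) \dequeue requests is exactly $\{0,1,\dots,v_0.first-1\}$. (Analogously, each position in $\{0,1,\dots,v_0.last\}$ is the position of a unique \enqueue{} request, and by the protocol every \enqueue{}/\dequeue request that gets a position gets a unique one.)

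Now suppose $\deq{u}{i}$ is assigned position $pos$. Let $B$ be the combined batch and $m$ the (even) index such that $\deq{u}{i}$ belongs to $B$ at index $m$. When the anchor processes index $m$ of $B$, the position $pos$ lies in the interval it hands out for that index, hence $pos\le v_0.last$ at that time, and after index $m$ is processed we have $v_0.first\ge pos+1$. By the invariant above, every position in $\{0,1,\dots,pos\}$ has therefore been assigned to a unique non-$\bot$ \dequeue request. In particular, for the given \enqueue{e} request with position $pos'<pos$, there is a unique \dequeue request $\deq{v}{j}$ that was assigned position $pos'$.

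It remains to identify $\deq{v}{j}$ with the request returning $e$ and to compare the two values. For the first point: the \enqueue{e} request issues $\textsc{Put}(e,k(pos'))$ and $\deq{v}{j}$ issues $\textsc{Get}(k(pos'),v)$; since $pos'$ is the position of exactly one \enqueue{} request and of exactly one \dequeue request, and since elements are enqueued at most once, the \textsc{Get} waits for precisely this \textsc{Put} (which it is guaranteed to receive, as there is no message loss) and so returns $e$. For the second point: $pos'\neq pos$, so by uniqueness of values $\deq{v}{j}\neq\deq{u}{i}$; were $\deq{u}{i}\prec\deq{v}{j}$, Lemma~\ref{lem:deqdeq} would force $pos<pos'$, a contradiction, hence $\deq{v}{j}\prec\deq{u}{i}$, i.e.\ $value(\deq{v}{j})<value(\deq{u}{i})$. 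This establishes the claim.

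I expect the main obstacle to be the bookkeeping in the first paragraph: making rigorous that the interval assigned to a given batch index is distributed among the dequeues of that index consecutively and exhaustively across all levels of the aggregation tree, so that "$pos'\le v_0.last$'' genuinely translates into "$pos'$ is the position of some non-$\bot$ \dequeue request.'' This is precisely where Lemmas~\ref{lem:deqdeq} and~\ref{lem:deq_sequence} must be invoked with care; once that invariant is in hand, the remaining deductions are short.
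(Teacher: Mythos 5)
Your proof is correct and follows essentially the same route as the paper, which justifies the lemma only by the one-line observation that ``the dequeue intervals always start with the lowest possible value''; your invariant that the positions below $v_0.first$ are exactly those already handed out, consecutively and exhaustively, to earlier \dequeue requests is precisely the rigorous form of that remark. The additional steps (matching the \textsc{Get} at $k(pos')$ to the unique \textsc{Put}, and deriving $value(\deq{v}{j}) < value(\deq{u}{i})$ from the contrapositive of Lemma~\ref{lem:deqdeq}) are the natural completions the paper leaves implicit.
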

The reason is that the dequeue intervals always start with the lowest possible value.

\begin{lemma}\label{lem:get_always_answered}
 Every \textsc{Get} operation issued by any of the nodes is answered in finite time.
\end{lemma}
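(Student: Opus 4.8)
The plan is to show that every \textsc{Get} request eventually reaches the node currently responsible for its key, and that once there, the matching \textsc{Put} request has either already arrived or will arrive in finite time. I would split the argument into two parts: routing termination and matching termination.

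For routing, I would first argue that in a \emph{static} configuration (no pending joins or leaves), a \textsc{Get}$(k,v)$ request is routed along the LDB to the node $w$ with $w \le k < succ(w)$ in $\mathcal{O}(\log n)$ rounds by Lemma~\ref{lemma:LDB:routing}, and this node stores the element once the \textsc{Put} has arrived. The subtlety is the update phase: while joins and leaves are being integrated, the De Bruijn topology is temporarily incorrect and the responsible node for $k$ may change. Here I would invoke the invariants established in Sections~\ref{subsec:join} and~\ref{subsec:leave}: whenever a node $v$ drops an edge during a join, it has already learned a neighbor strictly closer (in the same direction) to any target in that removed range; whenever a node leaves, it forwards all in-flight messages to its replacement $v'$ before disappearing, and $v'$ inherits exactly $v$'s connections and DHT data. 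Hence a \textsc{Get} request can never be ``stranded'': at any node $v$ that is not responsible for $k$, either $v$ has a neighbor strictly closer to the target, or $v$ is waiting and will shortly acquire such a neighbor (or the element itself). Since each update phase affects only finitely many nodes and terminates in finite time (finitely many acknowledgments propagate up the old aggregation tree), and since the topology converges to the correct De Bruijn graph, the request makes progress toward its target and reaches the responsible node after finitely many hops.

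For matching, I would use the fact (noted in Stage~4 and in Lemma~\ref{lem:prev_deq_always_exists}) that whenever a \dequeue request is assigned an actual position $pos$ rather than returning $\bot$, the anchor's bookkeeping guarantees a corresponding \enqueue{} request was assigned the same position $pos$ earlier in $\prec$. That \enqueue{} request issues a \textsc{Put}$(e,k(pos))$, which by the routing argument above also reaches the node responsible for $k(pos)$ in finite time. Since messages are never lost, the \textsc{Get} request — which waits at the responsible node until the \textsc{Put} arrives — is answered in finite time. For the case where the \dequeue is assigned no position, it returns $\bot$ immediately and is trivially answered.

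The main obstacle is making the routing-during-update-phase argument fully rigorous: one must carefully verify that the ``closer neighbor in the same direction'' invariant is maintained through every edge change in both the join and leave procedures (including nested responsibilities, where a node is responsible for several joining or leaving nodes), and that a \textsc{Get} request held at a temporarily-incorrect node is guaranteed to be released — either because the element is eventually stored there, or because a strictly-closer neighbor eventually appears. I would handle this by induction on the number of remaining topology changes in the current update phase, using that each update phase is finite and that outside update phases the topology is the correct (static) De Bruijn graph on the current node set.
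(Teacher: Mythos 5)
Your proposal is correct and follows essentially the same route as the paper's own proof sketch: the ``closer neighbor in the same direction'' invariant maintained through joins and leaves guarantees the \textsc{Get} eventually reaches the responsible node, message forwarding during leaves prevents loss, and the corresponding \textsc{Put} arrives by the same routing argument so the waiting \textsc{Get} is answered. Your additional step justifying that a matching \textsc{Put} must exist (via the anchor's interval bookkeeping) is a point the paper leaves implicit, but it does not change the overall approach.
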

\paragraph{Proof sketch}
Note that the way \pname{} deals with leave requests makes sure that no messages get lost during leave as was argued in Section~\ref{sec:join_leave}.
Furthermore, check in the protocol description that whenever a \textsc{Get} message is at a node $u$ that is not responsible for storing the position corresponding with the \textsc{Get} message, $u$ knows a node that is closer to the node responsible for storing the position.
Thus, each \textsc{Get} message will eventually reach the node that is responsible for storing it (note that even if the node responsible for storing it changes meanwhile, then the old node responsible for storing it knows the new one and can forward the message accordingly).
If that node already stores the element required by the \textsc{Get} message, it can be answered directly.
Otherwise, check that the same we said about the \textsc{Get} message analogously applies to the corresponding \textsc{Put} message.
Thus, the element will eventually arrive at that node and the \textsc{Get} message can be answered.
 
We are now ready to prove the following theorem:
\begin{theorem}
	\pname{} implements a data structure that is sequentially consistent.
\end{theorem}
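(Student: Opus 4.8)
The plan is to let $\prec$ be exactly the total order on $S$ induced by the $value$ function built above (this is a genuine total order since, as already noted, the assigned values are pairwise distinct naturals), and then to check conditions~1--4 of Definition~\ref{def:semantics} one at a time, using Lemmas~\ref{lem:deqdeq}--\ref{lem:get_always_answered}. Before the case analysis I would record two structural facts. (i) $M$ is a partial bijection between \enqueue{} and \dequeue requests: every \enqueue{} request is assigned a unique position, and by the strictness in Lemma~\ref{lem:deqdeq} two distinct \dequeue requests that are assigned positions receive distinct positions; hence a matched pair shares one position and no position is shared by two matched pairs. (ii) A \dequeue request lies in $M$ iff it is assigned a position. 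The ``only if'' part is immediate; for ``if'', if the request is assigned position $p$ then $v_0.first \le p \le v_0.last$ held when its index was processed in Stage~2, so $v_0.last$ had earlier been advanced past $p$ by some maximal run of \enqueue{} requests, exactly one of which was assigned position $p$ in Stage~3 and issued \textsc{Put}$(e,k(p))$; by Lemma~\ref{lem:get_always_answered} the corresponding \textsc{Get}$(k(p),\cdot)$ is answered, necessarily by $e$, so the pair is matched. Thus an unmatched \dequeue request is precisely one that returns $\bot$, and an unmatched \enqueue{} request is one whose (unique) position is never assigned to a \dequeue{}.

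Given (i)--(ii) the four properties are short. For \emph{Property~4} (local consistency) I would argue directly from the value rule: within one run of a process's own batch a later-generated request gets a strictly larger local value, and this strict inequality is preserved by every batch combination (only nonnegative quantities are added at a fixed index) and by the anchor's offset $c+\sum_{j<i}op_j''$; across two runs $i<i'$ of one batch, a request at index $i$ has value at most $c+\sum_{j\le i}op_j''$ while one at index $i'$ has value at least $1+c+\sum_{j<i'}op_j''$; and two requests of one process falling in different batches are processed by the anchor in the order their batches reach it, which is their generation order because a node keeps at most one batch in flight ($v.B$ is reset only when the node finishes Stage~3), while $c$ never decreases. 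For \emph{Property~1}, a matched pair shares a position, so the second half of Lemma~\ref{lem:deqenq_enqdeq} rules out $value(\deq{w}{j})<value(\enq{v}{i})$, giving $\enq{v}{i}\prec\deq{w}{j}$. For \emph{Property~3}, the two \enqueue{} positions of two distinct matched pairs are distinct, say $p<q$; Lemmas~\ref{lem:enqenq} and~\ref{lem:deqdeq} together with totality of $\prec$ then force both $\enq{u}{i}\prec\enq{w}{k}$ and $\deq{v}{j}\prec\deq{x}{l}$, and each of the two forbidden orderings contradicts one of these (the case $q<p$ is symmetric). For \emph{Property~2(a)}, an unmatched $\deq{u}{k}$ with $\enq{v}{i}\prec\deq{u}{k}\prec\deq{w}{j}$ and $(\enq{v}{i},\deq{w}{j})\in M$ returns $\bot$ by (ii), so Corollary~\ref{cor:deqenq} says $\deq{w}{j}$ returns no element enqueued by a request of value below $value(\deq{u}{k})$, contradicting that it returns $\enq{v}{i}$'s element, whose value is below $value(\deq{u}{k})$. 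For \emph{Property~2(b)}, for an unmatched $\enq{u}{k}$ with $\enq{u}{k}\prec\enq{v}{i}\prec\deq{w}{j}$ and $(\enq{v}{i},\deq{w}{j})\in M$, Lemma~\ref{lem:enqenq} places $\enq{u}{k}$'s position strictly below that of $\enq{v}{i}$, which equals the position of $\deq{w}{j}$, so Lemma~\ref{lem:prev_deq_always_exists} yields a \dequeue request returning $\enq{u}{k}$'s element, i.e.\ $\enq{u}{k}\in M$, a contradiction. The same reasoning carries over verbatim when \join{} and \leave requests are present, since $value$ and $M$ are defined uniformly and Lemma~\ref{lem:get_always_answered} already accounts for leaves.

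I expect the one genuinely delicate point to be \emph{Property~4}: none of Lemmas~\ref{lem:deqdeq}--\ref{lem:get_always_answered} say anything about $\prec$ restricted to a single process, so one has to follow the $value$ bookkeeping through batch combinations and, crucially, argue that the batches emitted by a single node reach the anchor in emission order -- which relies on a node holding at most one batch in flight at any time. A secondary point that needs care is the ``if'' direction of fact~(ii) above (a positioned \dequeue is always matched), which is exactly what lets Property~2(a) reduce to Corollary~\ref{cor:deqenq}; everything else is a mechanical application of the stated lemmas.
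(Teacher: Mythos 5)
Your proposal is correct and follows essentially the same route as the paper's proof: it takes $\prec$ to be the order induced by $value$, and verifies Property~1 via Lemma~\ref{lem:deqenq_enqdeq}, Property~2 via Corollary~\ref{cor:deqenq} and Lemmas~\ref{lem:enqenq} and~\ref{lem:prev_deq_always_exists}, Property~3 via Lemmas~\ref{lem:enqenq} and~\ref{lem:deqdeq}, and Property~4 from the definition of $value$. The only difference is that you spell out two points the paper leaves implicit -- that an unmatched \dequeue{} is exactly one returning $\bot$ (needed to invoke Corollary~\ref{cor:deqenq}), and the bookkeeping behind local consistency -- which is a faithful elaboration rather than a different argument.
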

\begin{proof}
First of all note that due to the protocol description and Lemma~\ref{lem:get_always_answered}, every \dequeue request returns a value (i.e., either $\bot$ or some element $e \in \mathcal E$).
 We will consider all four requirements of Definition~\ref{def:semantics} individually.
 
 First, consider an arbitrary \dequeue request \deq{w}{j} that returns a value $e \in \mathcal E$ that was added due to an \enqueue{} request \enq{v}{i}.
 Since the position in the DHT is the same for both these requests, Lemma~\ref{lem:deqenq_enqdeq} implies that $value(\enq{v}{i}) < value(\deq{w}{j})$.
 
 Second, again consider an arbitrary \dequeue request \deq{w}{j} that returns a value $e \in \mathcal E$ that was added due to an \enqueue{} request \enq{v}{i}.
 For the first part, assume for contradiction that there is a \deq{u}{k} that returns $\bot$ with $value(\enq{v}{i}) < value(\deq{u}{k}) < value(\deq{w}{j})$.
 Then, Corollary~\ref{cor:deqenq} implies that \deq{w}{j} cannot return $e$, which is a contradiction.
 For the second part, assume for contradiction that there is an \enq{u}{k} whose element $e' \in \mathcal E$ is never returned with $value(\enq{u}{k}) < value(\enq{v}{i}) < value(\deq{w}{j})$.
 Combining Lemma~\ref{lem:enqenq} with Lemma~\ref{lem:prev_deq_always_exists} yields the desired contradiction also here.
 
 For the third requirement, consider an arbitrary \dequeue request \deq{v}{j} that returns a value $e \in \mathcal E$ that was added due to an \enqueue{} request \enq{u}{i} and an arbitrary \dequeue request \deq{x}{l} that returns a value $e' \in \mathcal E$ that was added due to an \enqueue{} request \enq{w}{k}.
 For the first part, assume for contradiction that $value(\enq{u}{i}) < value(\enq{w}{k}) < value(\deq{x}{l}) < value(\deq{v}{j})$.
 Lemma~\ref{lem:enqenq} yields that for the positions $pos_a$ and $pos_b$ assigned to \enq{u}{i} and \enq{w}{k}, respectively, $pos_a < pos_b$ holds.
 Note that $pos_a$ is assigned to \deq{v}{j} and $pos_b$ is assigned to \deq{x}{l}.
 However, Lemma~\ref{lem:deqdeq} would imply $pos_b < pos_a$, which yields a contradiction.
 The second part of the third requirement is analogous.
 
 The fourth requirement is directly satisfied by the way we defined $\prec$.
 This completes the proof of the theorem.
\end{proof}

In the following, we want to analyze the runtime of the operations \enqueue{}, \dequeue, \join{} and \leave.
We start with \enqueue{} and \dequeue requests.

\begin{theorem} \label{theorem:runtime}
	Each request \enqueue{} or \dequeue needs $\mathcal O(\log n)$ rounds w.h.p. until it is processed correctly on the distributed queue.
\end{theorem}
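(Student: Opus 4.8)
The plan is to follow a single \enqueue{} or \dequeue request $op$, issued at a node $v$ in some round $t$, through the four stages of Section~\ref{sec:enq_deq} and to bound the number of rounds it spends in each stage by $\mathcal O(\log n)$. The two tools are that the aggregation tree has height $h = \mathcal O(\log n)$ w.h.p.\ (Corollary~\ref{cor:tree_height}) and that routing a message to the responsible node in the LDB takes $\mathcal O(\log n)$ rounds w.h.p.\ (Lemma~\ref{lemma:LDB:routing}); recall also that every node executes \textsc{Timeout} once per round. Throughout I would assume that no \join{} or \leave{} request is pending during the lifetime of $op$; an overlapping update phase only adds its own duration, which is $\mathcal O(\log n)$ by the analysis of Section~\ref{sec:join_leave}.

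The hard part, and the main obstacle, is Stage~1 (aggregation towards the anchor): a node forwards its batch only once $v.B = (0)$ \emph{and} it has received sub-batches from all children, so one must rule out that $op$ lingers in $v.W$ for a super-logarithmic number of rounds. I would establish a ``wave cadence'' statement by induction over the waves, starting from the initial idle state. A downward broadcast of interval decompositions (Stage~3) traverses the entire tree within $h$ rounds and sets $v.B \gets (0)$ at every node it passes; the upward wave that immediately follows then propagates from the leaves up to the anchor without ever blocking, because every node it reaches already has $v.B = (0)$, and because an idle leaf emits a (possibly empty) batch in every round the anchor never stalls. This shows the anchor starts a new batch every $2h + \mathcal O(1) = \mathcal O(\log n)$ rounds and, consequently, that $v.B$ is non-empty for at most $\mathcal O(\log n)$ consecutive rounds at every node $v$. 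Hence, after $op$ enters $v.W$ in round $t$, within $\mathcal O(\log n)$ rounds the next downward wave passes $v$ (clearing $v.B$) and then passes $v$'s children, which flush up within $\mathcal O(\log n)$ further rounds; $v$ then moves $v.W$ — now containing $op$ — into $v.B$ and sends it up, and it reaches the anchor within at most $h$ more rounds. So Stage~1 costs $\mathcal O(\log n)$ rounds w.h.p.

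Stage~2 is purely local at the anchor and costs $\mathcal O(1)$ rounds, and Stage~3 is a broadcast down the aggregation tree, hence $\mathcal O(h) = \mathcal O(\log n)$ rounds w.h.p. In Stage~4, $op$ either returns $\bot$ immediately (a \dequeue with a too-short interval) or issues a \textsc{Put} (which finishes an \enqueue{}) or a \textsc{Get}, and by Lemma~\ref{lemma:LDB:routing} such a request reaches the responsible DHT node within $\mathcal O(\log n)$ rounds w.h.p. For a \textsc{Get} that must wait for the matching \textsc{Put}, I would use that positions are handed out by the anchor in increasing order, so the \textsc{Put} for the same position was issued either in a strictly earlier batch — hence routed and stored at least a few rounds before the \textsc{Get} was even issued — or in the same batch, hence issued at most $\mathcal O(\log n)$ rounds earlier during the same Stage~3 broadcast and reaching the same DHT node within $\mathcal O(\log n)$ further rounds; together with Lemma~\ref{lem:get_always_answered} this gives that every \textsc{Get} is answered within $\mathcal O(\log n)$ rounds of being issued. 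Summing the four bounds and taking a union bound over the w.h.p.\ events that the tree height and the few routing paths involved are all $\mathcal O(\log n)$ completes the argument.
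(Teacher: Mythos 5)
Your proof follows the same stage-by-stage decomposition as the paper's own: Stage~1 and Stage~3 are bounded by the tree height (Corollary~\ref{cor:tree_height}) and Stage~4 by LDB routing (Lemma~\ref{lemma:LDB:routing}), and the bounds are summed. The extra care you take with the time a request can linger in $v.W$ (the ``wave cadence'' argument) and with a \textsc{Get} having to wait for its matching \textsc{Put} correctly fills in two points that the paper's much terser proof glosses over, so your argument is sound and, if anything, more complete.
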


\begin{proof}
	Consider an arbitrary request $op \in \{\enqueue{}, \dequeue\}$.
	Assume an $op$ is generated by some node $v \in V$.
	By Corollary~\ref{cor:tree_height} we need $\log n$ rounds w.h.p. to transfer $op$ to the anchor node $v_0$ (Stage 1) as part of a batch.
	Thus it takes $\log n$ rounds w.h.p. to assign a position to each request (Stages~2 and 3).
	Finding the corresponding node $u$ for the position in the DHT and transferring the \textsc{Put}/\textsc{Get} operation for $op$ takes again $\log n$ rounds w.h.p. by Lemma~\ref{lemma:LDB:routing}.
	Note that if $op = \dequeue$, then we only have a constant message overhead for \textsc{Get}, as $u$ is able to send the result of \textsc{Get} to $v$ in one round.
	Summing it all up, we need $\mathcal O(\log n)$ number of rounds w.h.p.
\end{proof}

We obtain the following corollary, which shows that our approach is indeed scalable for a large number of incoming requests.

\begin{corollary} \label{cor:scalable_op}
	Assume a node $v \in V$ has stored an arbitrary amount of queue requests in $v.W$.
	The number of rounds, needed to process all requests successfully is $\mathcal O(\log n)$ w.h.p.
\end{corollary}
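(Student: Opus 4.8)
The plan is to reduce the claim to the single‑batch runtime analysis already carried out in the proof of Theorem~\ref{theorem:runtime}, using the observation that the \emph{number} of buffered requests never enters the round count because all of them pass through the four stages inside one batch. First I would note that, by Algorithm~\ref{algo:phase_1}, the entire content of $v.W$ is moved into $v.B$ in a single \textsc{Timeout} step as soon as $v.B=(0)$ and $v$ has received sub‑batches from all of its children; from then on the requests originally stored in $v.W$ behave, as far as the protocol is concerned, like one batch that $v$ contributes to the aggregation tree. Hence it suffices to bound the time until this batch has been fully processed.

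Next I would walk through the stages as in Theorem~\ref{theorem:runtime}, but tracking the whole batch rather than a single request. In Stage~1, starting from the right virtual nodes (which have $C(v)=\emptyset$ and can forward immediately), every node forwards its combined batch to its parent in a single message; since each process executes \textsc{Timeout} once per round and the aggregation tree has height $\mathcal O(\log n)$ w.h.p.\ (Corollary~\ref{cor:tree_height}), the anchor holds the combined batch after $\mathcal O(\log n)$ rounds w.h.p. In Stage~2 the anchor computes all intervals $[x_1,y_1],\dots,[x_k,y_k]$ locally. In Stage~3 these intervals are broadcast down the aggregation tree via \textsc{Serve} (Algorithm~\ref{algo:phase_2_4}), again one message per tree edge, so another $\mathcal O(\log n)$ rounds w.h.p. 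In Stage~4 every node knows the position of each of its requests and issues the corresponding \textsc{Put} and \textsc{Get} requests, each of which reaches the responsible DHT node in $\mathcal O(\log n)$ rounds w.h.p.\ by Lemma~\ref{lemma:LDB:routing}, with a \textsc{Get} needing one further round to return its answer to $v$. Summing the four contributions gives $\mathcal O(\log n)$ rounds w.h.p.

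The step I expect to be the main obstacle is making rigorous that \emph{nothing} along this path costs more as $|v.W|$ grows. This hinges on two features of the model that must be stated explicitly: (i) a batch, though it may be a long alternating vector $(op_1,\dots,op_k)$ with large entries, is sent as one message, and the synchronous model charges one round per message irrespective of its size; and (ii) the possibly very many \textsc{Put}/\textsc{Get} messages spawned in Stage~4 travel concurrently, and because a channel may hold an arbitrary finite number of messages, all of which are delivered in the following round, no congestion term appears and each message still arrives within $\mathcal O(\log n)$ rounds. Local computation (computing and decomposing the $k$ intervals, generating the DHT requests) is likewise not charged against the round complexity.

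Finally I would dispatch the one remaining subtlety, namely that the batch in $v.W$ is actually handed upward within $\mathcal O(\log n)$ rounds rather than stalling behind batches already in flight: if some ancestor $u$ of $v$ currently has $u.B\neq(0)$, that earlier batch is itself processed in $\mathcal O(\log n)$ rounds by Theorem~\ref{theorem:runtime}, after which $u$ resets $u.B$ to $(0)$ at the end of \textsc{Serve} (Algorithm~\ref{algo:phase_2_4}); since \textsc{Timeout} fires every round, the emptying of batches and the arrival of children's sub‑batches propagate up the tree in a pipelined fashion, adding only $\mathcal O(\log n)$ rounds overall. Combining this with the per‑stage bounds yields the corollary.
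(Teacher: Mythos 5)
Your proposal is correct and follows essentially the same route as the paper, whose entire proof is the one-line observation that the claim follows from Theorem~\ref{theorem:runtime} together with the fact that requests are processed in batches. Your version simply makes explicit the details (one message per batch regardless of size, no congestion charge for concurrent DHT requests, pipelining behind in-flight batches) that the paper leaves implicit.
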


\begin{proof}
	Follows from Theorem~\ref{theorem:runtime} and the fact that we process requests in batches.
\end{proof}

Corollary~\ref{cor:scalable_op} emphasizes the advantages of processing multiple requests at once via batches: Imagine a node $v$ that generates one queue request in each round.
If a single queue request $op$ takes $\mathcal O(\log n)$ rounds to finish and $v$ is prohibited to process any further request before $op$ is finished, $v$'s local storage would eventually overflow.
For \pname{} however, $v$ is able to flush all requests contained in $v.B$ after $\mathcal O(\log n)$ rounds w.h.p.

\begin{theorem}
	Assume that at the beginning of the update phase there are $n$ joining nodes ($n/2$ node replacements).
	Then the update phase finishes after $\mathcal O(\log n)$ rounds w.h.p., if no node wants to join/leave the system in the meantime.
\end{theorem}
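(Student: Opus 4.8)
\emph{Proof plan.} The plan is to partition the update phase into four consecutive sub-phases and to bound each one separately by $\mathcal O(\log n)$ rounds w.h.p.: (i) the anchor sends the ``enter update phase'' flag (piggybacked on the last regular intervals) down the old aggregation tree; (ii) every node that is responsible for other nodes integrates those nodes; (iii) the acknowledgments of successful integration are aggregated back up the old aggregation tree; and (iv) the anchor --- after a possible anchor handover --- sends the ``update phase over'' message down the new aggregation tree. The hypothesis that no further join or leave request arrives guarantees that the set of nodes to be integrated, and hence the total amount of work in the phase, is fixed, so the claimed bound follows by adding the four contributions.

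For (i) and (iii) I would simply invoke Corollary~\ref{cor:tree_height}: the old aggregation tree has height $\mathcal O(\log n)$ w.h.p., a message that moves one tree edge per round downward (the flag) reaches every node in $\mathcal O(\log n)$ rounds, and the acknowledgments, which are combined at each node in exactly the same way as batches, reach the anchor in $\mathcal O(\log n)$ rounds. For (iv) I would first argue that the aggregation-tree edges are determined solely by the $succ$/$pred$ pointers (linear edges) and the virtual edges, both of which are correct as soon as the local integration of sub-phase (ii) has finished, so that the new aggregation tree is already well defined when (iv) starts; then, since node labels are always images of a pseudorandom hash function, the new node set of size $\Theta(n)$ behaves like $\Theta(n)$ independent uniform points in $[0,1)$, and the argument behind Corollary~\ref{cor:tree_height} gives height $\mathcal O(\log n)$ w.h.p. for the new tree as well. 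An anchor handover adds at most one routing step, i.e.\ $\mathcal O(\log n)$ rounds by Lemma~\ref{lemma:LDB:routing}, so (iv) also costs $\mathcal O(\log n)$ rounds w.h.p.

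The interesting sub-phase is (ii). For joins it is easy: a node $u$ responsible for joining nodes $v_1 < \dots < v_k$ already holds the relevant DHT data (it was moved there when $u$ became responsible), so integration consists only of the $\mathcal O(k)$ neighbour introductions from Section~\ref{subsec:join}, which $u$ can emit in parallel, i.e.\ in $\mathcal O(1)$ rounds. For leaves the prioritisation rule forces the leaving nodes of a maximal run of consecutive leaving nodes on the cycle to be replaced strictly from left to right, one after another, and a single replacement (spawning $v'$, which already stores $v$'s data, wiring up and tearing down the incident edges, and draining the in-flight messages, all bounded in the synchronous model) costs $\mathcal O(1)$ rounds; hence (ii) costs $\mathcal O(\ell)$ rounds, where $\ell$ is the length of the longest run of consecutive leaving nodes. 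The crucial claim I would prove is $\ell = \mathcal O(\log n)$ w.h.p.: since at the start of a leave update phase at most half of the emulated nodes are leaving, and since the labels are pseudorandom and independent of which processes requested to leave, any fixed block of $c\log n$ consecutive nodes is entirely leaving with probability at most $2^{-c\log n}$, so a union bound over the $\mathcal O(n)$ starting positions gives $\mathcal O(n^{1-c}) = o(1)$ for $c > 1$.

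I expect the main obstacle to be exactly this last point: ruling out that the one-at-a-time resolution of chains of neighbouring leaving nodes degrades the runtime to $\Theta(n)$, which is where one must carefully use that the adversarial choice of leaving processes is independent of the hash-induced ordering. A secondary subtlety is justifying that sub-phase (iv) may already use the new aggregation tree, for which it is essential that tree edges come only from the $succ$/$pred$ (and virtual) links and not from the long-range De Bruijn links, whose full repair is not needed for the update phase to terminate. Summing (i)--(iv) then gives $\mathcal O(\log n)$ rounds w.h.p.
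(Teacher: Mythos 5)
Your decomposition into four sub-phases matches the skeleton of the paper's argument, and sub-phases (i) and (iii) are handled identically via Corollary~\ref{cor:tree_height}. The two places where you diverge are worth noting. First, for leaves you spend most of your effort bounding the longest run of consecutive leaving nodes by $\mathcal O(\log n)$ w.h.p.\ and charging the update phase $\mathcal O(\ell)$ rounds for the left-to-right resolution of such a run. In the protocol, however, the replacements $v'$ are created lazily \emph{before} the update phase begins --- the theorem's hypothesis explicitly states that the $n/2$ replacements already exist at the start of the phase --- and the update phase itself only deletes these replacements and rewires the neighborhood, which the paper dispatches as constant work per responsible node, done in parallel. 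Your run-length argument is a valid probabilistic bound, but it pays for work that the theorem does not account to the update phase; since it still yields $\mathcal O(\log n)$, nothing breaks. Second, and more substantively, the paper identifies the anchor handover as the \emph{only} step that could exceed the bound and resolves it by a counting argument: with $n$ joining nodes each old node is responsible for at most $\mathcal O(\log n)$ joining nodes w.h.p., hence only $\mathcal O(\log n)$ new nodes can lie to the left of the old anchor, so the search for the new leftmost node is a short walk along predecessor pointers. You instead invoke Lemma~\ref{lemma:LDB:routing} for this step, which sits uneasily with your own (correct) observation that the long-range De Bruijn links need not be repaired during the update phase --- the routing lemma presupposes exactly those links. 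Replacing that appeal with the paper's counting argument (or with the same balls-into-bins style reasoning you already use for the run-length bound) closes this soft spot; otherwise your proof is sound and, in making the height of the \emph{new} aggregation tree explicit, somewhat more careful than the paper's.
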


\begin{proof}
	By Corollary~\ref{cor:tree_height}, we need $\mathcal O(\log n)$ rounds w.h.p. to propagate the start of the update phase to all nodes in the aggregation tree.
	It is easy to see that a node $v$ responsible for multiple \join{} or \leave requests can process these requests in a constant amount of rounds once it got the permission from $p(v)$ in the update phase.
	The only case that may exceed the claimed upper bound is the case where the (old) anchor transfers its data to the new anchor, i.e., to the node with minimal label.
	However, with $n$ nodes joining, each old node is only responsible for at most $\mathcal O(\log n)$ joining nodes w.h.p.
	This implies that there are only $\mathcal O(\log n)$ joining nodes w.h.p. with smaller label than the anchor.
	The same argumentation holds for leaving nodes.
\end{proof}

Now we want to analyze the size of messages that are sent over communication channels in the network.
Obviously, the messages containing the most data are the ones containing a batch.
Thus, we want to get an upper bound on the maximum batch size.

\begin{theorem} \label{theorem:batch_size}
	Batches representing \enqueue{}, \dequeue, \join{} and \leave requests have size $\mathcal O(\log n)$ w.h.p. if each node generates one such request per round.
\end{theorem}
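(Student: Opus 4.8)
The plan is to bound the number of entries in the sequence defining a batch (the two extra integer counters $B.j$ and $B.l$ from Section~\ref{sec:join_leave} add only $\mathcal O(1)$ more entries, so this is what governs the batch size). The crucial structural observation is that the \emph{combine} operation from Section~\ref{sec:enq_deq} is ``max-length'' rather than ``sum-length'': combining batches of sequence-lengths $\ell_1,\dots,\ell_m$ yields a batch of length $\max_i \ell_i$. Unrolling the recursive aggregation performed in Stage~1, it follows by induction on the aggregation tree that the batch a node $v$ forwards in a given aggregation wave has sequence-length $\max_{w}\,\ell(w)$, where $w$ ranges over the nodes in the subtree rooted at $v$ and $\ell(w)$ is the length of the ``own'' sub-batch contributed by $w$ in its corresponding wave, i.e.\ the sub-batch built only from the queue operations $w$ generated locally since it last flushed a batch.

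Next I would bound $\ell(w)$ for a single node. By the way $v.W$ is updated when a queue operation is generated, consecutive \enqueue{} (resp.\ \dequeue) operations only increment the last entry of the sequence, and a new entry is appended only when the request type switches; hence $\ell(w)$ is at most one plus the number of queue operations $w$ generated since its last flush. It therefore suffices to show that at most $\mathcal O(\log n)$ rounds pass between two consecutive flushes of any node. Condition on the event that the aggregation tree has height $\mathcal O(\log n)$, which holds w.h.p.\ by Corollary~\ref{cor:tree_height}. On this event the full cycle of Stages~1--4 takes $\mathcal O(\log n)$ rounds, exactly as argued in the proof of Theorem~\ref{theorem:runtime}: a batch climbs to the anchor in $\mathcal O(\log n)$ rounds, the position intervals are broadcast back down in $\mathcal O(\log n)$ rounds (at which point $v.B$ is reset to $(0)$), and a fresh aggregation wave starting at the leaves reaches $v$ again within another $\mathcal O(\log n)$ rounds; should an update phase intervene, it too lasts only $\mathcal O(\log n)$ rounds. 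Since by assumption each node generates at most one request per round, this gives $\ell(w)=\mathcal O(\log n)$ for every node and every wave on this event.

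Putting the two steps together: on the (w.h.p.) event that the tree height is $\mathcal O(\log n)$, every batch ever sent has sequence-length $\mathcal O(\log n)$, and appending the two counters $B.j,B.l$ changes this only by a constant. I expect the delicate point to be the claim that no node's inter-flush period can grow beyond $\mathcal O(\log n)$ rounds: one must argue that a node does not stall indefinitely waiting for its subtree to re-aggregate (which again reduces to the height bound) and that the update phase, during which queue-operation batches are frozen, does not prolong things asymptotically. The combine-is-max observation, by contrast, is immediate from Definition~\ref{def:batch} and the surrounding text, and the per-node length bound is a one-line consequence of the batch-update rule.
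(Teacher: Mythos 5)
Your proposal is correct and follows essentially the same route as the paper's proof: bound the join/leave contribution by a constant, observe that a new sequence entry is created only when the request type switches, and use the $\mathcal O(\log n)$-round processing time from Theorem~\ref{theorem:runtime} to bound the number of such switches per aggregation cycle. Your argument is more carefully worked out than the paper's — in particular you make explicit the fact that combining batches takes the \emph{maximum} of the sequence lengths (so lengths do not add up along the tree), a step the paper leaves implicit in the definition of batch combination.
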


\begin{proof}
	Note that \join{} and \leave requests a node $v$ is responsible for are represented in a batch by a single constant.
	By Theorem~\ref{theorem:runtime}, each batch $v.W$ containing \enqueue{}, \dequeue, \join{} and \leave requests needs $\mathcal O(\log n)$ rounds w.h.p. until it is processed.
	Therefore a batch $v.W$ of some node $v$ can only have a size up to $\mathcal O(\log n)$ w.h.p. until it is sent out via $v.B$ assuming each node generates one request per round: The size of the batch increases only if the type (\enqueue{} or \dequeue) of the  request generated in round $s_i$ differs from the type of the request generated in round $s_{i-1}$.
\end{proof}

Finally we note that \pname{} is fair regarding the number of elements that each node has to store.
This immediately follows from the fairness property of the DHT (Lemma~\ref{lemma:dht:fairness}) and the fact that each joining or leaving node gets or transfers its DHT data.

\begin{corollary}
	\pname{} is fair.
\end{corollary}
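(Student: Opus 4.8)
The plan is to reduce the claim directly to the fairness of consistent hashing (Lemma~\ref{lemma:dht:fairness}). First I would recall what ``fair'' means here: each node should, in expectation, be responsible for storing the same number of queue elements. Every element $e$ currently residing in the queue is stored in the DHT under the key $k(p(e))$, where $p(e) \in \mathbb{N}_0$ is the position assigned to $e$ by the anchor and $k(\cdot)$ is the publicly known pseudorandom hash function mapping positions into $[0,1)$. Since every element is enqueued at most once, the positions occupied at any fixed point in the computation are pairwise distinct, so the keys $k(p(e))$ behave like independent uniform samples from $[0,1)$. A node $v$ stores exactly those elements whose keys fall in $[v, succ(v))$, which is precisely the setting in which Lemma~\ref{lemma:dht:fairness} applies; hence the expected load of each node is $(v_0.last - v_0.first + 1)/n$, i.e., equal across all nodes.

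The second step is to argue that \join{} and \leave{} requests do not break this invariant. Here I would appeal to the protocol description in Section~\ref{sec:join_leave}: whenever a node $v$ joins, the node responsible for it hands over all DHT data whose keys lie in $v$'s new interval (and re-delegates from the appropriate predecessor when several joining nodes cluster in one interval), and whenever a node $v$ leaves, its replacement $v'$ takes over exactly $v$'s DHT data and its position on the sorted cycle. Thus, as soon as the update phase completes, the DHT is again in the standard state of Section~\ref{sec:pre:dht}, with element $e$ stored at the unique node $w$ satisfying $w \leq k(p(e)) < succ(w)$, and no element is ever lost. Consequently Lemma~\ref{lemma:dht:fairness} continues to apply verbatim to the new node set, again yielding equal expected load.

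I do not expect a genuine obstacle: the statement is essentially an immediate consequence of Lemma~\ref{lemma:dht:fairness} once one observes that \pname{} only ever stores elements through the DHT and that join and leave preserve the consistent-hashing invariant. The single point that warrants a sentence of care is the transient behaviour during the update phase (and, more generally, while a node $u$ is temporarily responsible for joining or leaving nodes): in such windows $u$ holds data destined for several intervals at once. I would simply remark that, by the earlier analysis, each old node is responsible for only $\mathcal O(\log n)$ joining or leaving nodes w.h.p., so this is a transient, lower-order effect, and the fairness claim --- a statement about the expected steady-state storage load --- is unaffected by it.
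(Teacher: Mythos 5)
Your proposal is correct and matches the paper's own argument, which likewise derives the corollary immediately from the fairness of consistent hashing (Lemma~\ref{lemma:dht:fairness}) together with the fact that joining and leaving nodes receive or hand over their DHT data. The extra remarks you add about distinct positions hashing to uniform keys and about the transient load during the update phase are fine elaborations but do not change the route.
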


\section{Distributed Stack} \label{sec:stack}
In this section we propose some simple modifications to \pname{} in order to realize a scalable distributed stack that fulfills sequential consistency.
Instead of \enqueue{} and \dequeue requests, the stack provides requests \push{} and  \pop such that for a single process it resembles a LIFO data structure.
Definition~\ref{def:semantics} can then be adjusted easily.

A natural approach would be to just change the way in which the anchor computes the position intervals for \dequeue requests (see Stage~2 in Section~\ref{sec:enq_deq}).
Recall that the anchor computes the interval $[v_0.first, \min\{v_0.first + op_i - 1, v_0.last\}]$ in case there are $op_i$ consecutive \dequeue requests.
For $op_i$ consecutive \pop requests, we want the anchor to return the interval $[\max\{1, v_0.last - op_i + 1\}, v_0.last]$ and update $v_0.last$ to $\max\{0, v_0.last - op_i\}$ afterwards.
Observe that we do not need the variable $v_0.first$ anymore.
Processes decomposing their position intervals in stage $3$ now have to take out the maximum position in the interval first.
Unfortunately, this modification does not suffice on its own, because the assigned positions for inserted elements are not unique: For the operation sequence $(\push{x}, \pop, \push{y})$ both \push{} requests are assigned to the same position by the anchor, leading to elements being replaced in the DHT.
Therefore we have to make sure that the key under which elements are inserted into the DHT is unique: We introduce a variable $v_0.ticket \in \mathbb{N}$ at the anchor, which is increased by $i$ every time $v_0.last$ is increased by $i$, but is never decreased, i.e., $v_0.ticket$ is monotonically increasing.
Intuitively, $v_0.ticket$ represents the number of \push{} requests ever processed at the anchor, whereas $v_0.last$ represents the current size of the stack.
A request is now assigned a pair $(position, ticket) \in \mathbb{N} \times \mathbb{N}$ instead of just a single position.
For such a pair $(p, t)$ that got assigned to a \push{x} request, we store $(p, t)$ and $x$ at the node that is responsible for position $p$ in the DHT.
A \pop request that got assigned to the pair $(p', t')$ searches the DHT for the node $v$ that is responsible for position $p'$.
After arrival at $v$, we remove the element with ticket $t \leq t'$ from $v$ and return it  to the initiator of the \pop request.

Nodes are able to locally combine generated requests in order to answer them immediately: For instance, if node $v$ generates $k$ \push{} requests $p_1,\ldots, p_k$ followed by $k$ \pop requests $po_1,\ldots, po_k$, then $v$ can process all of these requests immediately by assigning the $k-i+1$-th \push{} request to the $i$-th \pop request for all $i \in \{1,\ldots,k\}$.
This is particularly advantageous in scenarios where the rate at which nodes generate requests is very high.
It is easy to see that we do not violate sequential consistency with this modification.
Furthermore it follows that all batches which are sent upwards the aggregation tree have the form $B = (op_1, op_2)$ with $op_1 \in \mathbb{N}$ representing \pop operations and $op_2$ representing \push{} operations.
This immediately yields the following theorem on the size of a batch:

\begin{theorem} \label{theorem:stack:batch:size}
	Batches representing \push{} and \pop, requests have constant size.
\end{theorem}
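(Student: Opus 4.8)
The plan is to show that \emph{every} batch a node constructs or forwards in the stack protocol has at most two entries, so its size is bounded by an absolute constant; here ``size'' means the number of entries $op_i$, which (in contrast to Theorem~\ref{theorem:batch_size}) is constant independently of the number of rounds or requests, and the bound is deterministic rather than only w.h.p.

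First I would make the local-combining optimization described just before the theorem precise. While a node $v$ generates its own \push{} and \pop{} requests, model it as maintaining a local pending stack $\sigma$, empty whenever $v$ starts accumulating a new batch: a \push{} pushes its element onto $\sigma$, and a \pop{} is answered immediately by popping the top of $\sigma$ whenever $\sigma\neq\emptyset$. A \pop{} issued while $\sigma=\emptyset$ is the only kind of request that must be forwarded; call it an \emph{escaping pop}. At the moment $v$ freezes its buffer $v.W$ into $v.B$, the requests still requiring global processing are exactly the escaping pops together with the \push{} requests whose elements currently sit in $\sigma$ (the \emph{escaping pushes}); so $v.W$, read in the order requests were generated, is a sequence consisting only of escaping pops and escaping pushes.

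Next I would establish the key structural claim: in generation order, all escaping pops come strictly before all escaping pushes, so $v.W$ has the form $(op_1,op_2)$, with $op_1$ the number of escaping pops and $op_2$ the number of escaping pushes (either may be $0$). Let $t^\ast$ be the last point in time, up to the freezing of $v.W$, at which $\sigma=\emptyset$; this exists since $\sigma$ starts empty. Any \pop{} issued after $t^\ast$ sees $\sigma\neq\emptyset$, hence is matched locally and does not escape, so every escaping pop is issued at or before $t^\ast$. Conversely, any element still in $\sigma$ at freeze time has been continuously in $\sigma$ since it was pushed (a LIFO stack only removes its top), and since $\sigma=\emptyset$ at $t^\ast$ that push must have occurred after $t^\ast$; hence every escaping push is issued after $t^\ast$. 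Combining the two observations yields the claimed two-entry form, with the same convention as stated in the text ($op_1$ counting \pop{}s, $op_2$ counting \push{}es), extending Definition~\ref{def:batch} accordingly.

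Finally I would handle aggregation: an intermediate or anchor node combines batches $B_1,\dots,B_k$ using the component-wise combination rule of Section~\ref{sec:enq_deq} (now applied to two-entry stack batches), producing $\bigl(\sum_j op_1^{(j)},\ \sum_j op_2^{(j)}\bigr)$, which again has at most two entries; together with the node's own $v.W$, which is of this form by the previous step, induction up the aggregation tree shows that every transmitted batch has the form $B=(op_1,op_2)$, i.e.\ constant size. I expect no deep obstacle here: the only points needing genuine (if short) argument are the non-interleaving of escaping pops and escaping pushes, and the check that the combination operation actually used for stack batches is the component-wise one, so that aggregation never introduces a third entry; everything else is bookkeeping.
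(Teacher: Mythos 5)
Your proposal is correct and follows essentially the same route as the paper, which justifies the theorem informally via the preceding discussion of local request combining: after locally matching \push{}/\pop{} pairs, only the unmatched pops (all generated before the unmatched pushes) and unmatched pushes survive, so every batch has the two-entry form $(op_1,op_2)$, and component-wise combination preserves this up the aggregation tree. Your "last time the local pending stack is empty" argument is a clean formalization of the non-interleaving step that the paper leaves implicit.
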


In contrast to Theorem~\ref{theorem:batch_size}, Theorem~\ref{theorem:stack:batch:size} holds for any rate in which nodes generate stack requests.

Since we consider the asynchronous message passing model, all that is left is to prevent the following scenario from happening: Consider the operation sequence $(a, b, c, d)$ with $a = \push{x}$, $b = \pop$, $c = \push{y}$ and $d = \pop$.
Then the anchor assigns the pair $(p,t)$ to $a$, $(p, t)$ to $b$, $(p, t+1)$ to $c$ and $(p, t+1)$ to $d$.
Due to asynchronicity in our system, the DHT requests representing $a, b, c$ and $d$ may arrive in the order $(a, d, c, b)$ at the node responsible for position $p$.
This leads to $d$ returning the element $x$, as the ticket value for $a$ is smaller than the ticket value for $d$.
Request $b$ does not find an element with ticket value smaller or equal than its own and consequently fails, violating sequential consistency.
In order to fix this, we force all nodes $v$ to wait in stage 4 before switching to stage 1 again, until all DHT-operations that $v$ has generated in stage 4 have been finished (we just have to add this constraint to the clause in lines 2-3 of Algorithm~\ref{algo:phase_1}).
Reconsidering the above example, it follows that the order of arrival of the DHT operations will be either $(a, b, c, d)$ or $(a, c, b, d)$, because $a$ and $d$ are guaranteed to be in different batches than $b$ and $c$ when combining requests as described above.
It is easy to see that both cases prevail sequential consistency.
We obtain the main result of this section:

\begin{theorem}
	The modified \pname{} protocol implements a stack that is sequentially consistent.
\end{theorem}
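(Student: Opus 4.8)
The plan is to mirror the structure of the proof of the earlier theorem (``\pname{} implements a data structure that is sequentially consistent''), reusing the total-order construction $\prec$ essentially verbatim, but now adapted to the LIFO semantics. Concretely, I would first re-run the value-assignment scheme of Section~\ref{sec:analysis} on \push{}/\pop requests to obtain a total order $\prec$ on all stack requests, and observe that property~4 (local consistency) is immediate by construction, exactly as before. The bulk of the work is then to re-prove the analogues of Lemmas~\ref{lem:deqdeq}--\ref{lem:prev_deq_always_exists} for the stack's modified Stage~2/3 computation (intervals $[\max\{1,v_0.last-op_i+1\},v_0.last]$ for \pop, with the monotone $v_0.ticket$), and to check that the three remaining requirements of the (suitably LIFO-adjusted) Definition~\ref{def:semantics} follow.

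The key steps, in order: (i) State the LIFO adjustment of Definition~\ref{def:semantics} --- in particular the ``third property'' now reads that elements are popped in \emph{reverse} order of pushing, i.e.\ for matched pairs $(\push{},\pop)$ the push/pop orders are \emph{nested} rather than disjoint. (ii) Prove the stack analogue of Lemmas~\ref{lem:enqenq} and~\ref{lem:deqdeq}: since for \push{} requests the anchor still hands out positions in increasing order along $\prec$ (the interval for a \push{}-run starts at the old $v_0.last+1$), $\push{u}{i}\prec\push{w}{k}$ still implies $pos_a<pos_b$ \emph{and} $t_a<t_b$ for the assigned tickets; whereas for \pop requests, because the anchor now peels positions off the \emph{top}, $\pop{u}{i}\prec\pop{v}{j}$ implies $pos_a>pos_b$ (the order reverses). (iii) Prove the matching lemma: a \pop assigned pair $(p,t')$ returns the element of the \push{} that was assigned a pair $(p,t)$ with the largest $t\le t'$ that reached that DHT node; then argue that, thanks to the Stage~4 ``wait for all DHT operations to finish before returning to Stage~1'' constraint together with the local request-combining, the push and pop of a matched pair are separated by a batch boundary whenever a conflicting operation on the same position exists, so the arrival order at the responsible DHT node is consistent with $\prec$ restricted to that position. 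This is the step that the text's last paragraph is really establishing, and it is where I would spend the most care. (iv) Conclude properties 1--3: property~1 ($\push{}\prec\pop{}$ for matched pairs) follows from $t_{\text{push}}\le t_{\text{pop}}$ and the fact that a \pop with ticket bound $t'$ only matches a push with ticket $\le t'$ that is $\prec$-earlier; property~2 (a \pop returns an element iff the stack is nonempty at that point) follows from the analogue of Lemma~\ref{lem:deq_sequence}/Corollary~\ref{cor:deqenq} for the top-of-stack intervals; property~3 (nesting) follows from combining step~(ii) with step~(iii): if $\push{u}{i}\prec\push{w}{k}$ then $pos_a<pos_b$, but then the matched pops satisfy, by the reversal, $\pop{w}{k}\prec\pop{u}{i}$ --- i.e.\ the inner push is popped first, which is exactly the nesting condition, contradicting any ``crossing'' configuration.

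The main obstacle I anticipate is step~(iii): unlike the queue, where each position is written once and read once, a stack position $p$ may be pushed and popped repeatedly, so uniqueness of the assigned \emph{position} is not enough and the whole point of the $v_0.ticket$ component is to disambiguate. I would need to argue carefully that (a) distinct \push{} requests assigned the same $p$ get strictly increasing tickets along $\prec$, (b) a \pop assigned $(p,t')$ is $\prec$-matched to the unique latest such push with ticket $\le t'$, and (c) the asynchronous arrival order never lets a later-ticket element be ``seen'' by an earlier-ticket \pop --- which is precisely what the batch-boundary argument (``$a$ and $d$ are guaranteed to be in different batches than $b$ and $c$'') buys us, but making the induction on batch indices rigorous, and handling the case where requests are \emph{not} locally combined (so that push and pop of a matched pair genuinely travel to the DHT independently), is the delicate part. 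Everything else reduces to bookkeeping that parallels the queue proof almost line for line.
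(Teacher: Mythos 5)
Your overall route is the right one, and in fact it is \emph{more} than the paper gives: the paper states this theorem without any proof environment, letting the preceding discussion (the modified \pop intervals, the ticket counter, and the final paragraph on why waiting at the end of Stage~4 rules out the bad arrival order $(a,d,c,b)$) stand in for the argument, closing with ``it is easy to see that both cases prevail sequential consistency.'' Your plan --- restate Definition~\ref{def:semantics} with the nesting condition, transplant the queue lemmas, and concentrate the effort on the ticket/arrival-order step --- is exactly how one would turn that discussion into an actual proof, and you correctly identify the Stage~4 wait together with local combining as the load-bearing ingredient.

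One concrete correction, though: the two order-preservation claims in your step~(ii) are false as stated, and you cannot use them in step~(iv) in the form you wrote. For two \push{} requests $a \prec b$ it is \emph{not} true in general that $pos_a < pos_b$ --- the paper's own example $(\push{x}, \pop, \push{y})$ assigns both pushes the same position, which is the entire reason the ticket was introduced; only $t_a < t_b$ survives unconditionally. Symmetrically, for two \pop requests $a \prec b$ it is not true that $pos_a > pos_b$: after $a$ is served, intervening pushes can raise $v_0.last$ above $pos_a$ again, so $b$ may receive a strictly larger position. You half-acknowledge this reuse of positions in your ``main obstacle'' paragraph, yet your step~(iv) derivation of the nesting property leans on the unrestricted versions. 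The fix is to prove the position comparisons only under the hypothesis that actually occurs in a crossing configuration: if $a \prec b$ are pushes and $b$ precedes the \pop matched to $a$, then $a$'s element is still on the stack when $b$'s interval is assigned, so $pos_b > pos_a$ and $t_b > t_a$ do hold, and the top-first peeling of \pop intervals then forces $b$'s matching \pop to precede $a$'s. Restricted this way, your steps (iii) and (iv) go through and the rest is, as you say, bookkeeping parallel to the queue proof.
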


\join{} and \leave requests are processed in the exact same manner on the stack as described in Section~\ref{sec:join_leave}.
\section{Evaluation}
We implemented and evaluated \pname{} as well as its stack adaptation (see Section~\ref{sec:stack}) on different instances.
In this section we present and interpret the most important results of these experiments.

\subsection{Setup}
We implemented the protocols for the synchronous message passing model and performed the following experiment for instances up to $100000$ nodes: At the beginning of each (synchronous) round, we generate 10 queue requests and assign them to random nodes in the system.
After 1000 rounds we stop the generation of requests and wait until all requests that are still being processed have finished successfully.
For each finished request we measure the number of rounds it took the requests to finish. 
For the results presented in this section we always consider the average amount of rounds per requests.
We tested instances with different ratios of \enqueue{}/\dequeue requests, respectively, \push{}/\pop requests.

\subsection{Distributed Queue}
Consider Figure~\ref{fig:queue_results} for results on the distributed queue.

\begin{figure}[ht]
 	\centering
	\begin{tikzpicture}
		\begin{axis}[
   	 		xlabel={n},
   	 		ylabel={(avg.) \#rounds per request},
      		scaled x ticks = false,
          	xmin=0, xmax=100000,
    		ymin=0, ymax=150,
    		xtick={10000, 50000, 100000},
    		ytick={0,50, 100, 150},
    		legend pos=south east,
    		ymajorgrids=true,
    		grid style=dashed,
		]
			\addplot[
    			color=blue,
    			mark=square,
    		]
    		coordinates {
    			(50,40.6564)(100,48.2873)(500,63.8632)(1000,71.542)(5000,88.1265)(10000,95.7975)(50000,119.7074)(100000,127.5005)
    		};
			\addlegendentry{$1.0$}	
			\addplot[
    			color=red,
    			mark=square,
    		]
    		coordinates {
    			(50,40.7641)(100,48.5138)(500,63.9932)(1000,71.8012)(5000,88.369)(10000,95.9759)(50000,119.4166)(100000,127.4895)
    		};
			\addlegendentry{$0.75$}
			\addplot[
    			color=yellow,
    			mark=square,
    		]
    		coordinates {
    			(50,40.9518)(100,48.6158)(500,64.4369)(1000,72.1492)(5000,88.4588)(10000,96.1679)(50000,120.5234)(100000,128.2695)
    		};
			\addlegendentry{$0.5$}
			\addplot[
    			color=green,
    			mark=square,
    		]
    		coordinates {
    			(50,35.6329)(100,42.5268)(500,55.3101)(1000,62.5008)(5000,77.6353)(10000,84.8457)(50000,105.4715)(100000,111.3723)
    		};
			\addlegendentry{$0.25$}
			\addplot[
    			color=orange,
    			mark=square,
    		]
    		coordinates {
    			(50,29.6391)(100,35.4468)(500,46.5561)(1000,52.4935)(5000,66.5815)(10000,72.3194)(50000,89.7779)(100000,95.3227)
    		};
			\addlegendentry{$0.0$}
		\end{axis}
	\end{tikzpicture}
 	\caption{Average number of (synchronous) rounds per request on the distributed queue. The graphs represent the different probabilities $p$ that a generated request is an \enqueue{} operation, meaning that $1-p$ is the probability that a generated request is a \dequeue operation.}
 	\label{fig:queue_results}
\end{figure}
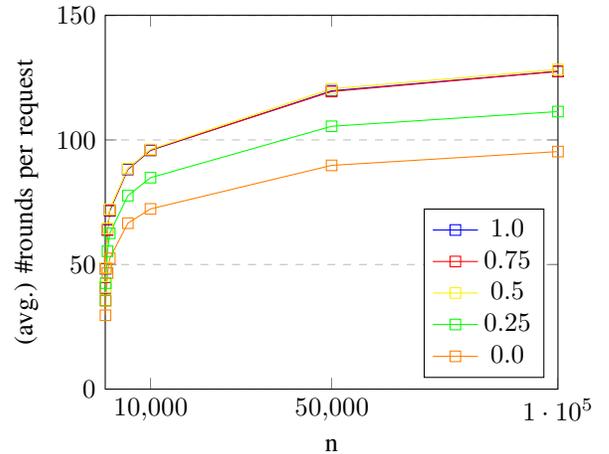

One can see that the average number of rounds for a request to finish scales logarithmically in the number of nodes $n$.
As soon as the \enqueue{} rate drops below $0.5$ the queue performs better, because the queue is empty most of the times.
This implies that \dequeue operations do not have to search for a position in the DHT, as they can be processed immediately as soon as the requesting node receives the position intervals from the anchor.
Interestingly, the curves for \enqueue{} rates of $0.5$ or higher are almost the same, which means that \dequeue operations waiting for the corresponding \enqueue{} operations in the DHT do not have a significant impact on the performance.

Roughly, these curves correspond to $3$ times the height of the aggregation tree (denoted as $ATH \approx \log n$) plus the average number of rounds it takes for a DHT operation to finish: A queue request first has to wait after generation until the next aggregation phase begins (on average $ATH$ rounds), then it is aggregated to the root ($ATH$ rounds) and assigned a position ($ATH$ rounds).
Afterwards we process the corresponding DHT operation in approximately $\log n$ rounds.

\subsection{Distributed Stack}
Consider Figure~\ref{fig:stack:results} for results on the distributed stack.

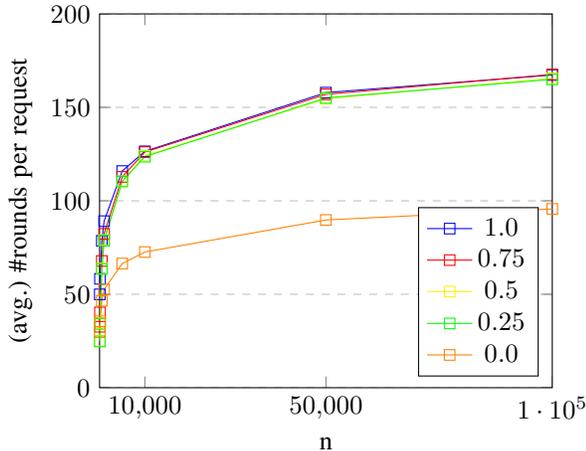
\begin{figure}[ht]
 	\centering
	\begin{tikzpicture}
		\begin{axis}[
   	 		xlabel={n},
   	 		ylabel={(avg.) \#rounds per request},
      		scaled x ticks = false,
    		xmin=0, xmax=100000,
    		ymin=0, ymax=200,
    		xtick={10000, 50000, 100000},
    		ytick={0,50, 100, 150, 200},
    		legend pos=south east,
    		ymajorgrids=true,
    		grid style=dashed,
		]
			\addplot[
    			color=blue,
    			mark=square,
    		]
    		coordinates {
    			(50,49.9538)(100,58.2514)(500,78.6243)(1000,89.1162)(5000,115.9221)(10000,126.4649)(50000,157.8873)(100000,167.3154)
    		};
			\addlegendentry{$1.0$}	
			\addplot[
    			color=red,
    			mark=square,
    		]
    		coordinates {
    			(50,32.6301)(100,40.2965)(500,67.8188)(1000,82.3171)(5000,112.809)(10000,126.0784)(50000,156.9622)(100000,167.6223)
    		};
			\addlegendentry{$0.75$}
			\addplot[
    			color=yellow,
    			mark=square,
    		]
    		coordinates {
    			(50,25.1101)(100,34.5587)(500,63.7763)(1000,78.993)(5000,110.4514)(10000,124.0818)(50000,154.5981)(100000,165.598)
    		};
			\addlegendentry{$0.5$}
			\addplot[
    			color=green,
    			mark=square,
    		]
    		coordinates {
    			(50,24.7742)(100,34.1142)(500,63.5453)(1000,78.6899)(5000,110.3842)(10000,123.602)(50000,155.0656)(100000,164.9926)
    		};
			\addlegendentry{$0.25$}
			\addplot[
    			color=orange,
    			mark=square,
    		]
    		coordinates {
    			(50,29.7331)(100,35.5132)(500,46.4788)(1000,52.6014)(5000,66.4461)(10000,72.5963)(50000,89.7435)(100000,95.659)
    		};
			\addlegendentry{$0.0$}
		\end{axis}
	\end{tikzpicture}
 	\caption{Average number of (synchronous) rounds per request on the distributed stack. The graphs represent the different probabilities $p$ that a generated request is a \push{} operation, meaning that $1-p$ is the probability that a generated request is a \pop operation.}
 	\label{fig:stack:results}
\end{figure}

Same as for the queue, the average number of rounds for a request scales logarithmically in the number of nodes $n$.
However, the stack performs a bit slower than the queue, because we wait at the end of stage 4 until all DHT operations have finished.
This delays the start of the next aggregation phase and leads to all curves representing \push{} ratios greater than $0$ being roughly the same.
Obviously the stack performs better if we only generate \pop operations.
In fact, the curve for a \push{} ratio of $0$ is the same as the corresponding curve for the queue, which makes sense, since both data structures do not have to issue any DHT operations.

Unfortunately, we cannot see the impact of the local combination of operations in this setting, because the probability that more than one operation is generated at a node $v$ in the same aggregation phase is very low.
Therefore we perform an additional experiment: We consider an instance of $n = 10000$ nodes and generate requests at nodes with constant probability $p \in \{0.05, 0.1, 0.15, 0.2, 0.25, 0.5, 1\}$ at each round. For instance, if $p = 1$, we generate one request at each node in each round leading to $1000n = 10^7$ generated requests after $1000$ rounds.
The probability that a generated request is an \enqueue{}/\push{} operation is $0.5$.
Again we looked at the average number of rounds it takes a request to be processed successfully for both, the queue and the stack.
The results can be seen in Figure~\ref{fig:high_load_results} (note that the horizontal axis now represents the different probabilities $p$ mentioned above).

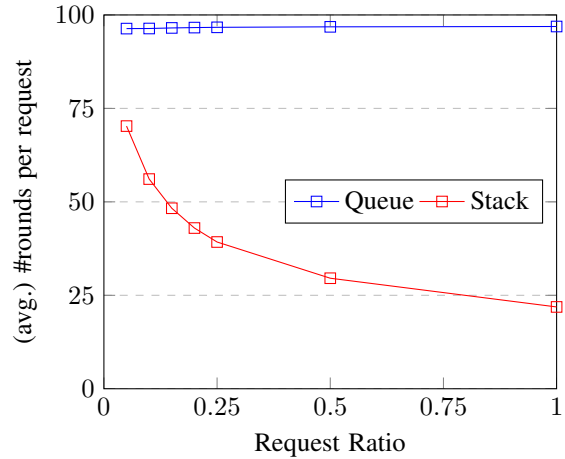
\begin{figure}[ht]
 	\centering
	\begin{tikzpicture}
		\begin{axis}[
   	 		xlabel={Request Ratio},
   	 		ylabel={(avg.) \#rounds per request},
    		xmin=0, xmax=1.0,
    		ymin=0, ymax=100,
    		xtick={0, 0.25, 0.5, 0.75, 1},
    		ytick={0,25, 50, 75, 100},
  legend style={
    at={(0.685,0.56)},
    anchor=north,
    legend columns=-1
  },     	ymajorgrids=true,
    		grid style=dashed,
		]
			\addplot[
    			color=blue,
    			mark=square,
    		]
    		coordinates {
    			(0.05,96.3482757404242)(0.1,96.372055558966)(0.15,96.5245560762212)(0.2,96.6191124058358)(0.25,96.6885722548602)(0.5,96.8187467915529)(1,96.9081668331668)
    		};
			\addlegendentry{Queue}	
			\addplot[
    			color=red,
    			mark=square,
    		]
    		coordinates {
    			(0.05,70.2688717957758)(0.1,56.0945963581573)(0.15,48.2884111981324)(0.2,42.9928458873924)(0.25,39.2564344852411)(0.5,29.5688440499032)(1,21.887537062937)
    		};
			\addlegendentry{Stack}
		\end{axis}
	\end{tikzpicture}
 	\caption{Average number of (synchronous) rounds per request on the queue/stack with different request ratios and $n = 10000$.}
 	\label{fig:high_load_results}
\end{figure}

Here we can see that the stack's performance gets even better if the rate at which requests are generated increases.
This is due to nodes issuing multiple requests in the same aggregation phase, which leads to the stack being able to combine operations locally, such that they can be processed immediately.

\section{Conclusion} \label{sec:conclusion}
We presented the protocol \pname{} for a distributed queue that guarantees sequential consistency and is able to process requests fast even for a high rate of incoming requests.

A challenging task would be to make \pname{} self-stabilizing, such that the network can recover itself from faulty states.
However, due to the various amount of variables that have to be stored at each node and the fact that we are in an asynchronous environment, one will quickly have to weaken the queue semantics.

\bibliography{literature}

\begin{thebibliography}{10}
\providecommand{\url}[1]{#1}
\csname url@samestyle\endcsname
\providecommand{\newblock}{\relax}
\providecommand{\bibinfo}[2]{#2}
\providecommand{\BIBentrySTDinterwordspacing}{\spaceskip=0pt\relax}
\providecommand{\BIBentryALTinterwordstretchfactor}{4}
\providecommand{\BIBentryALTinterwordspacing}{\spaceskip=\fontdimen2\font plus
\BIBentryALTinterwordstretchfactor\fontdimen3\font minus
  \fontdimen4\font\relax}
\providecommand{\BIBforeignlanguage}[2]{{%
\expandafter\ifx\csname l@#1\endcsname\relax
\typeout{** WARNING: IEEEtran.bst: No hyphenation pattern has been}%
\typeout{** loaded for the language `#1'. Using the pattern for}%
\typeout{** the default language instead.}%
\else
\language=\csname l@#1\endcsname
\fi
#2}}
\providecommand{\BIBdecl}{\relax}
\BIBdecl

\bibitem{DBLP:journals/jacm/BlumofeL99}
R.~D. Blumofe and C.~E. Leiserson, ``Scheduling multithreaded computations by
  work stealing,'' \emph{J. {ACM}}, vol.~46, no.~5, pp. 720--748, 1999.

\bibitem{DBLP:conf/spaa/PlaxtonRR97}
C.~G. Plaxton, R.~Rajaraman, and A.~W. Richa, ``Accessing nearby copies of
  replicated objects in a distributed environment,'' in \emph{{ACM SPAA}},
  1997, pp. 311--320.

\bibitem{DBLP:conf/stoc/KargerLLPLL97}
D.~R. Karger, E.~Lehman, F.~T. Leighton, R.~Panigrahy, M.~S. Levine, and
  D.~Lewin, ``Consistent hashing and random trees: Distributed caching
  protocols for relieving hot spots on the world wide web,'' in \emph{{STOC}},
  1997, pp. 654--663.

\bibitem{DBLP:conf/sigcomm/StoicaMKKB01}
I.~Stoica, R.~Morris, D.~R. Karger, M.~F. Kaashoek, and H.~Balakrishnan,
  ``Chord: {A} scalable peer-to-peer lookup service for internet
  applications,'' in \emph{{SIGCOMM}}, 2001, pp. 149--160.

\bibitem{DBLP:conf/middleware/RowstronD01}
A.~I.~T. Rowstron and P.~Druschel, ``Pastry: Scalable, decentralized object
  location, and routing for large-scale peer-to-peer systems,'' in
  \emph{Middleware}, 2001, pp. 329--350.

\bibitem{DBLP:journals/jsac/ZhaoHSRJK04}
B.~Y. Zhao, L.~Huang, J.~Stribling, S.~C. Rhea, A.~D. Joseph, and
  J.~Kubiatowicz, ``Tapestry: a resilient global-scale overlay for service
  deployment,'' \emph{{IEEE} Journal on Selected Areas in Communications},
  vol.~22, no.~1, pp. 41--53, 2004.

\bibitem{DBLP:conf/podc/LakshmanM09}
A.~Lakshman and P.~Malik, ``Cassandra: structured storage system on a {P2P}
  network,'' in \emph{{PODC}}, 2009, p.~5.

\bibitem{DBLP:conf/isaac/AlaeiTG05}
S.~Alaei, M.~Toossi, and M.~Ghodsi, ``Skiptree: {A} scalable range-queryable
  distributed data structure for multidimensional data,'' in \emph{{ISAAC}},
  2005, pp. 298--307.

\bibitem{DBLP:conf/sigmod/KrollW94}
B.~Kr{\"{o}}ll and P.~Widmayer, ``Distributing a search tree among a growing
  number of processors,'' in \emph{{SIGMOD}}, 1994, pp. 265--276.

\bibitem{DBLP:conf/podc/MichaelS96}
M.~M. Michael and M.~L. Scott, ``Simple, fast, and practical non-blocking and
  blocking concurrent queue algorithms,'' in \emph{{PODC}}, 1996, pp. 267--275.

\bibitem{DBLP:journals/jpdc/HendlerSY10}
D.~Hendler, N.~Shavit, and L.~Yerushalmi, ``A scalable lock-free stack
  algorithm,'' \emph{J. Parallel Distrib. Comput.}, vol.~70, no.~1, pp. 1--12,
  2010.

\bibitem{DBLP:conf/ipps/ShavitL00}
N.~Shavit and I.~Lotan, ``Skiplist-based concurrent priority queues,'' in
  \emph{{IPDPS}}, 2000, pp. 263--268.

\bibitem{DBLP:reference/crc/MoirS04}
M.~Moir and N.~Shavit, ``Concurrent data structures,'' in \emph{Handbook of
  Data Structures and Applications.}, 2004.

\bibitem{DBLP:conf/wdag/HendlerIST10}
D.~Hendler, I.~Incze, N.~Shavit, and M.~Tzafrir, ``Scalable flat-combining
  based synchronous queues,'' in \emph{{DISC}}, 2010, pp. 79--93.

\bibitem{DBLP:journals/dc/ShavitT16}
N.~Shavit and G.~Taubenfeld, ``The computability of relaxed data structures:
  queues and stacks as examples,'' \emph{Distributed Computing}, vol.~29,
  no.~5, pp. 395--407, 2016.

\bibitem{DBLP:conf/icalp/ScheidelerS09}
C.~Scheideler and S.~Schmid, ``A distributed and oblivious heap,'' in
  \emph{{ICALP}}, 2009, pp. 571--582.

\bibitem{DBLP:conf/sss/RichaSS11}
A.~W. Richa, C.~Scheideler, and P.~Stevens, ``Self-stabilizing de bruijn
  networks,'' in \emph{{SSS}}, ser. Lecture Notes in Computer Science, vol.
  6976.\hskip 1em plus 0.5em minus 0.4em\relax Springer, 2011, pp. 416--430.

\bibitem{DBLP:journals/talg/NaorW07}
M.~Naor and U.~Wieder, ``Novel architectures for {P2P} applications: The
  continuous-discrete approach,'' \emph{{ACM} Trans. Algorithms}, vol.~3,
  no.~3, p.~34, 2007.

\bibitem{DBLP:conf/podc/HerlihyTW01}
M.~Herlihy, S.~Tirthapura, and R.~Wattenhofer, ``Competitive concurrent
  distributed queuing,'' in \emph{{PODC}}, 2001, pp. 127--133.

\bibitem{DBLP:journals/tpds/TirthapuraH06}
S.~Tirthapura and M.~Herlihy, ``Self-stabilizing distributed queuing,''
  \emph{{IEEE} Trans. Parallel Distrib. Syst.}, vol.~17, no.~7, pp. 646--655,
  2006.

\bibitem{DBLP:journals/ppl/SharmaB15}
G.~Sharma and C.~Busch, ``Distributed queuing in dynamic networks,''
  \emph{Parallel Processing Letters}, vol.~25, no.~2, 2015.

\end{thebibliography}
\bibliographystyle{IEEEtran}
\end{document}